\documentclass[11pt]{article}

\PassOptionsToPackage{hyperfootnotes=false}{hyperref}
\usepackage[final]{acl}
\usepackage{subcaption}  
\usepackage{times}
\usepackage{latexsym}

\usepackage[T1]{fontenc}

\usepackage[utf8]{inputenc}

\usepackage{microtype}

\usepackage{inconsolata}

\usepackage{graphicx}

\usepackage{amsmath} 
\usepackage{mathtools}
\usepackage{amsfonts}
\usepackage{amsthm}
\newtheorem{proposition}{Proposition}

\usepackage{enumitem}
\usepackage{array}
\usepackage{tabularx}
\usepackage{booktabs}
\usepackage{multirow}
\usepackage{makecell}
\usepackage{threeparttable}
\usepackage{rotating}
\usepackage{bm} 
\usepackage[table]{xcolor}
\usepackage{colortbl}
\usepackage[breakable, skins, many]{tcolorbox}
\usepackage{xspace}

\usepackage{xurl} 
\usepackage{hyperref} 

\definecolor{bestcell}{RGB}{231,217,201}
\newcommand{\best}[1]{\textbf{#1}}

\renewcommand{\paragraph}[1]{\smallskip\noindent\textbf{#1.}}
\renewcommand{\subparagraph}[1]{\smallskip\noindent\textbf{\underline{#1.}}}

\newcolumntype{L}[1]{>{\raggedright\arraybackslash}p{#1}}  %

\newcommand{\modelname}{\textbf{GGSS}\xspace}

\newcommand{\codepath}[1]{\nolinkurl{#1}}

\newtcolorbox{promptbox}[1]{%
  enhanced,
  breakable,
  colback=gray!8,
  colframe=gray!40,
  colbacktitle=gray!70!black,
  coltitle=white,
  fonttitle=\bfseries,
  title=#1,
  boxrule=0.6pt,
  arc=2mm,
  left=3.5mm,right=3.5mm,top=2.5mm,bottom=2.5mm,
  fontupper=\small\ttfamily\raggedright\sloppy,
}

\title{GGSS: Geodesic-Gated Spherical Steering for Inference-Time Debiasing of Generative Vision--Language Models}

\author{
  \textbf{Yiqun Sun\textsuperscript{1}},
  \textbf{Junyu Chen\textsuperscript{2}},
  \textbf{Pengfei Wei\textsuperscript{1}}\thanks{\,Corresponding author.},
  \textbf{Lawrence B. Hsieh\textsuperscript{1}}
\\
  \textsuperscript{1}Magellan Technology Research Institute (MTRI),
  \textsuperscript{2}National University of Singapore
\\
  \texttt{\{duke.sun, pengfei.wei, lawrence.hsieh\}@mtri.co.jp},
  \texttt{chenjunyu@u.nus.edu}
}

\begin{document}

\renewcommand{\thefootnote}{\fnsymbol{footnote}}
\maketitle
\renewcommand{\thefootnote}{\arabic{footnote}}
\setcounter{footnote}{0}
\begin{abstract}
Generative vision--language models (VLMs) are increasingly used in human-centered settings, yet they can produce demographically biased outputs even when images differ only in controlled attributes such as perceived race or gender. However, existing inference-time debiasers were largely designed for static embeddings or CLIP-like models rather than generative VLMs. We propose \modelname---\textbf{G}eodesic-\textbf{G}ated \textbf{S}pherical \textbf{S}teering---a norm-preserving intervention that discovers a counterfactual bias subspace on the unit hypersphere, steers visual tokens along geodesic arcs, and uses an adaptive gate to focus correction on tokens that carry stronger demographic signal. We evaluate four generative VLMs against ten adapted inference-time debiasing baselines and prompt-based mitigation under a single operating-point protocol across categorical, pairwise, and occupation-gender bias tests, while also measuring general visual-language capability. \modelname\ achieves the lowest average bias on all four models, significant on three of four backbones under paired permutation tests, while preserving MMStar accuracy within $\pm 0.6$\,p.p.\ of the unsteered baseline. Code is available at \url{https://github.com/dukesun99/GGSS}.
\end{abstract}

\section{Introduction}
\label{sec:intro}

\begin{figure}[t]
  \centering
  \includegraphics[width=0.99\columnwidth]{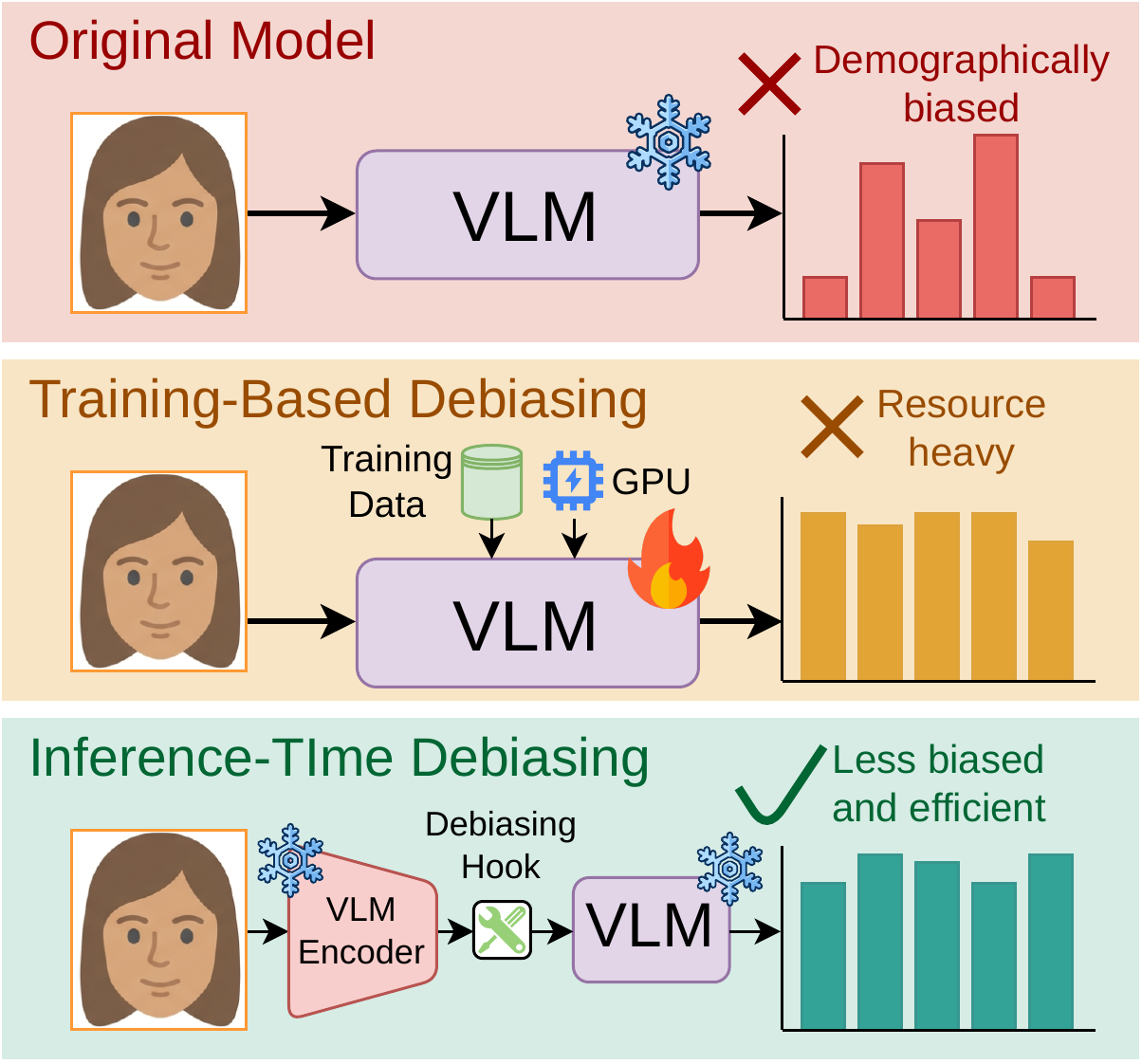}
  \caption{\textbf{Motivation for inference-time debiasing.} Instead of retraining a VLM with additional data and heavy GPU resources, GGSS inserts a lightweight debiasing hook during inference to reduce demographic bias efficiently while keeping the original model frozen.}
  \label{fig:teaser}
  \vspace{-1.0em}
\end{figure}

Vision--language models (VLMs) now appear in human-centered, and sometimes high-stakes, decision-making settings, including recruitment and decision support \cite{liu2023visual,wu2024candidate,banerjee2024learning}. As they become more capable and more widely deployed, a growing body of work has documented that modern VLMs produce systematically different outputs across demographic groups even when the underlying visual content is controlled \cite{chen2026reflect,howard2024socialcounterfactuals,huang2025visbias}. Debiasing such models at \emph{inference time} (Figure~\ref{fig:teaser}) is attractive because it avoids retraining, applies to frozen checkpoints, and can be reconfigured as the target notion of bias changes \cite{chuang2023biasedprompts,gerych2024bendvlm,wring2026}. However, many existing inference-time debiasing methods were developed for CLIP-like \cite{radford2021learning} settings where an input is summarized by a single global embedding. INLP, R-LACE, and LEACE erase concepts through linear subspace operations \cite{ravfogel2020null,ravfogel2022rlace,belrose2023leace}. Classifier-guided methods apply mean-shift corrections. BendVLM performs retrieval-based equalization in CLIP-style embedding spaces \cite{gerych2024bendvlm,radford2021learning}. Modern \emph{generative} VLMs \cite{bai2025qwen3,liu2024llavanext} instead encode an image as many visual tokens that are fused with a decoder-only language model. Reusing CLIP-era methods in this setting can be overly destructive and, as our experiments show, can even be dominated by the unsteered baseline.

We identify three concrete challenges that a successful inference-time debiaser for generative VLMs must address. First, \textbf{geometry preservation}: visual activations in recent VLMs enter the language model through a small set of projection layers whose outputs are later consumed by attention and decoding. Euclidean subtraction of a bias direction distorts both directions and norms; empirically this matters increasingly with model scale, causing sharp drops in general quality and over-steering failure modes that on-sphere steering avoids (\S\ref{sec:ablation}). Second, \textbf{token-level bias heterogeneity}: demographic signal in a generative VLM is concentrated in a small fraction of visual tokens (e.g., face or clothing tokens) while many tokens encode pose, background, or textual cues and carry essentially none. Hard subspace projection applied uniformly over-corrects the latter and silently degrades task performance. Third, \textbf{single-direction rigidity}: many existing methods assume a binary protected attribute and a single ``bias direction,'' but race-like attributes in practice have multi-class structure and must be handled as a low-dimensional subspace \cite{manzini2019black}.

We propose \modelname\ (\textbf{G}eodesic-\textbf{G}ated \textbf{S}pherical \textbf{S}teering), an inference-time debiasing framework that matches these three challenges with three coordinated design choices. First, we discover a \emph{bias subspace} $\mathbf{V}_{\mathrm{bias}}$ from counterfactual image sets that vary demographic attributes while holding other visual factors as fixed as possible. We collect pooled visual activations and apply singular value decomposition to tangent-space shifts around per-group Fr\'echet means on the unit hypersphere, yielding a multi-dimensional and norm-aware estimate of demographic variation. Second, at inference time we replace hard null-space projection with \emph{spherical linear interpolation} \citep{shoemake1985slerp} between the original token direction and a debiased target, producing a smooth geodesic rotation that preserves norms exactly. Third, we add an \emph{adaptive confidence gate} $g_i$ per token, calibrated from the distribution of bias-projection magnitudes observed during discovery. This gate applies little or no steering to low-bias tokens and stronger steering to tokens with unusually large bias components, which makes the intervention more selective than a uniform projection.

We evaluate \modelname\ on four generative VLMs against ten adapted inference-time debiasing baselines, a prompt-based mitigation family, and four structural ablations. The evaluation covers race- and gender-related bias tests together with MMStar capability preservation, and all methods are compared under a fair protocol, with bootstrap confidence intervals and paired permutation tests for every reported comparison. Experiments show \modelname\ attains the lowest average bias on all four models, with statistically significant reductions on three of the four backbones and per-task reductions up to 96\%, 84\%, and 61\%, while keeping MMStar accuracy within $\pm 0.6$\,p.p.\ of baseline. These results support our central claim: for generative VLMs, \emph{adaptive} geodesic steering is a better primitive for inference-time bias mitigation than hard, uniform projection. Matched-gate ablations attribute robustness at scale to on-sphere steering and selectivity to the gate.

\medskip

\noindent Our contributions are fourfold:
\begin{itemize}[nolistsep,left=8pt]
    \item We identify why hard, token-uniform, single-direction debiasing operators under-perform on generative VLMs, and propose \modelname\ as a norm-preserving geodesic alternative.
    \item We introduce an adaptive gate that uses the discovery-set bias-norm distribution to assign per-token steering strengths without additional training.
    \item We port representative CLIP- and LM-era debiasers to the same generative-VLM intervention layer, producing a unified comparison suite.
    \item We evaluate four VLMs under one best-avg-$\alpha$ protocol; \modelname\ achieves the lowest average bias on all four while preserving MMStar capability.
\end{itemize}

\section{Related Work}
\label{sec:related_work}

\paragraph{Social biases in vision--language models}
Prior work shows that VLMs encode and express social biases across representations, benchmarks, and downstream behaviors. Grounded vision-language embeddings inherit demographic associations from earlier representation models~\cite{ross2021groundedbias,srinivasan2022dangerous}, and later studies measure disparities across gender, race, age, nationality, religion, and other attributes~\cite{zhou2022vlstereoset,janghorbani2023mmbias,ruggeri2023multidim}. These biases also surface in captioning, visual question answering, retrieval, open-ended prompting, counterfactual evaluation, and real-image settings~\cite{hirota2022captionbias,hall2023zeroshot,ghate2025propagate,howard2024socialcounterfactuals,wang2024vlbiasbench,huang2025visbias,narnaware2026bbqvbenchmarkingvisualstereotype}. This motivates inference-time mitigation methods that can be applied to frozen models without retraining.

\paragraph{VLM debiasing: training-based vs.\ inference-time}
Most VLM debiasing methods target CLIP-style dual encoders, not generative VLMs. Training-based approaches learn lightweight transformations, pruning or imputation modules, causal adjustments, bias-corpus disentanglement, LLM-guided projections, or joint image--text corrections~\cite{seth2023dear,jung2024sfid,pang2025cabin,jang2025targetbias,molahasani2025prism,zhang2025jointvl}. Inference-time methods instead modify prompts or embedding geometry at test time, including calibrated prompt-defined projection, query-specific local debiasing, and rotation-based correction~\cite{chuang2023biasedprompts,gerych2024bendvlm,wring2026}. Post-hoc transformations of embedding geometry also serve instruction following~\cite{feng2025don} and concept suppression in text-to-image generation~\cite{chen2025growth,li2025responsible}. Debiasing for generative VLMs is less explored, with recent work moving toward internal activation intervention and monitoring~\cite{ratzlaff2024ablating,an2026interpretable,huben2024sae,cheng2025social,singh2026prompts,wang2026sparse}. \modelname\ follows this activation-level direction, but focuses on generative VLM on token activations.

\paragraph{Concept erasure and activation steering}
Our baselines draw on linear concept erasure and inference-time activation steering. INLP, R-LACE, and LEACE remove protected information by learning or solving for linear subspace projections~\cite{ravfogel2020null,ravfogel2022rlace,belrose2023leace}. These classical erasers are useful reference points, but they were designed for static embeddings or encoder representations and typically act as hard, token-uniform projections. Activation steering methods such as ActAdd, Representation Engineering, and Inference-Time Intervention instead manipulate hidden states directly at test time~\cite{turner2023activation,zou2023representation,li2023inferencetime}. \modelname\ shares the inference-time spirit of this line, but replaces additive Euclidean steering with spherical geometry. The unit hypersphere and spherical interpolation are standard tools in directional statistics and computer graphics~\cite{mardia2000directional,pennec2006riemannian,shoemake1985slerp}, and related geometric views of concept directions have appeared in representation analysis and null-space steering~\cite{park2023linear,sheng2026alphasteer,sun2025one}. We instantiate these ideas for image-side activations in generative VLMs and add a bias-norm-calibrated gate that makes steering token-adaptive.

\section{The GGSS Framework}
\label{sec:framework}

\begin{figure*}[ht]
  \centering
  \includegraphics[width=\textwidth]{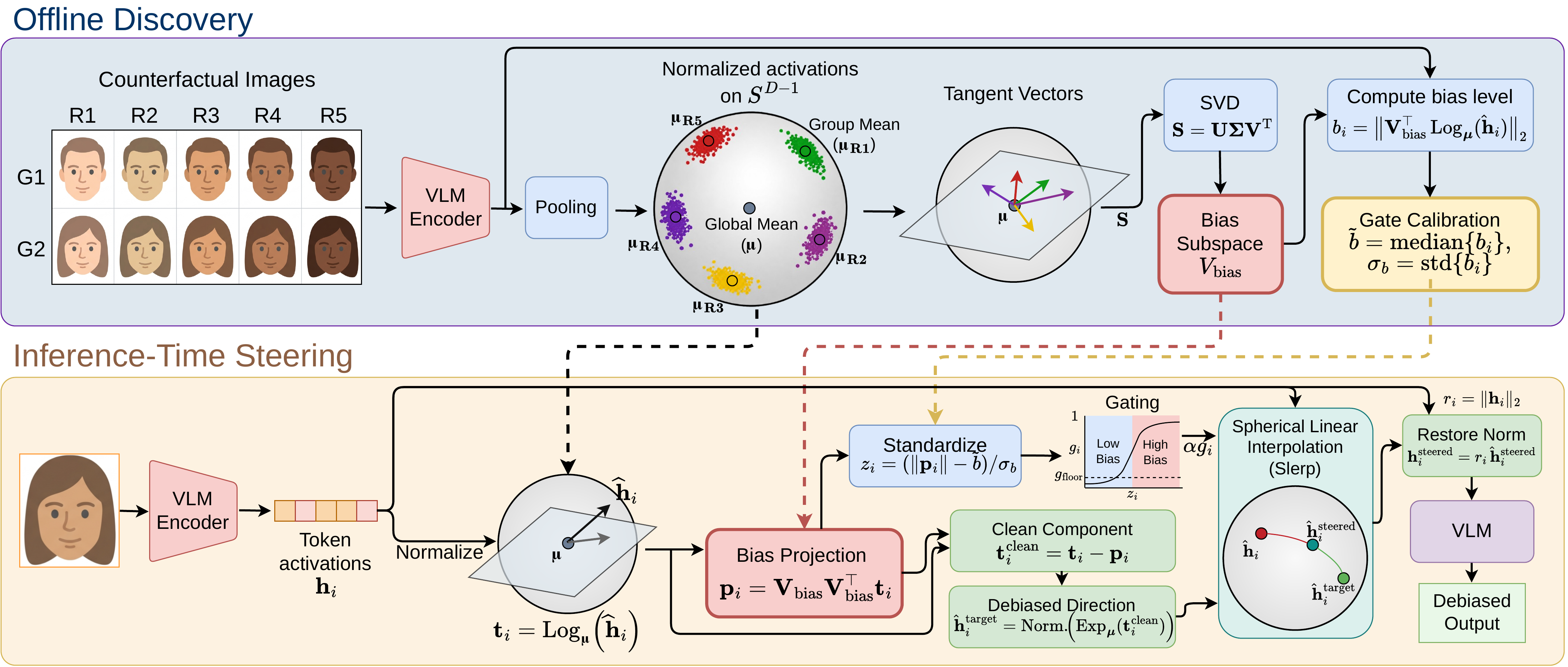}
  \caption{\textbf{Overview of the \modelname\ framework.} The offline stage estimates a spherical bias subspace from counterfactual image activations and calibrates a bias-sensitive token gate. At inference time, each visual token is mapped to tangent space, decomposed into bias and clean components, gated according to its bias magnitude, geodesically rotated toward a debiased direction via Slerp, and finally norm-restored before being passed to the LLM decoder of the VLM.}
  \label{fig:overview}
  \vspace{-1.0em}
\end{figure*}

We propose \modelname---\textbf{G}eodesic-\textbf{G}ated \textbf{S}pherical \textbf{S}teering---a two-stage framework for debiasing generative VLMs, summarized in Figure~\ref{fig:overview}. Offline discovery passes counterfactual images through the frozen VLM to estimate the steering quantities: a protected-attribute basis, a global spherical reference point $\boldsymbol{\mu}$, and gate calibration statistics $(\tilde{b},\sigma_b)$. Online inference is separate: a new test input produces an activation tensor, and \modelname\ steers its visual tokens using the discovered quantities. The learned quantities are reused for new inputs at inference time.

\paragraph{Notation}
Let
\[
\mathbb{S}^{D-1}
=
\{\mathbf{x}\in\mathbb{R}^D:\|\mathbf{x}\|_2=1\}
\]
be the unit sphere in $\mathbb{R}^D$, and let
$\nu(\mathbf{x})
=
\frac{\mathbf{x}}{\|\mathbf{x}\|_2}$ be the normalization operator. For $\mathbf{p},\mathbf{q}\in\mathbb{S}^{D-1}$, define $d(\mathbf{p},\mathbf{q})=\arccos\langle\mathbf{p},\mathbf{q}\rangle$ and
\[
\begin{aligned}
\log_{\mathbf{p}}(\mathbf{q})
&=
d(\mathbf{p},\mathbf{q})
\frac{\mathbf{q}-\langle\mathbf{p},\mathbf{q}\rangle\mathbf{p}}
{\|\mathbf{q}-\langle\mathbf{p},\mathbf{q}\rangle\mathbf{p}\|_2},
\\
\exp_{\mathbf{p}}(\mathbf{t})
&=
\cos(\|\mathbf{t}\|_2)\mathbf{p}
+
\sin(\|\mathbf{t}\|_2)\frac{\mathbf{t}}{\|\mathbf{t}\|_2},
\end{aligned}
\]
with the standard conventions $\log_{\mathbf{p}}(\mathbf{p})=\mathbf{0}$ and $\exp_{\mathbf{p}}(\mathbf{0})=\mathbf{p}$. For $\mathbf{p},\mathbf{q}\in\mathbb{S}^{D-1}$ with $\mathbf{p}\neq\mathbf{q}$ and $\mathbf{q}\neq-\mathbf{p}$, define spherical linear interpolation by
\[
\begin{aligned}
\operatorname{Slerp}(\mathbf{p},\mathbf{q};\beta)
&=
\frac{\sin((1-\beta)\theta)}{\sin\theta}\mathbf{p}
+
\frac{\sin(\beta\theta)}{\sin\theta}\mathbf{q},
\\
\theta&=\arccos\langle\mathbf{p},\mathbf{q}\rangle.
\end{aligned}
\]
When $\mathbf{p}=\mathbf{q}$, we use the continuous extension $\operatorname{Slerp}(\mathbf{p},\mathbf{p};\beta)=\mathbf{p}$. For $0\leq\beta\leq 1$, Slerp interpolates between $\mathbf{p}$ and $\mathbf{q}$; for $\beta>1$, it continues along the same geodesic beyond the target.

The rest of this section follows the
two stages of GGSS. Section~\ref{sec:discovery} presents the offline
discovery stage, where counterfactual activations are used to estimate a
protected-attribute subspace, a global reference point, and gate calibration
statistics. Section~\ref{sec:steering} presents the inference-time
steering stage, where each visual token is projected away from the learned
protected-attribute subspace, updated along the sphere, and rescaled to its
original norm.

\subsection{Counterfactual Bias Subspace Discovery}
\label{sec:discovery}

\paragraph{Counterfactual representatives}
We write the discovery images as
\[
\mathbf{x}_{u,a},\qquad u\in\mathcal{U},\quad a\in\mathcal{A}.
\]
Here $a$ indexes the protected attribute value, such as a race category, while $u$ collects the fixed factors, such as identity, occupation, gender, pose, clothing, background, and lighting. This abstracts the concrete occupation/identity/race/gender indexing specified in Appendix~\ref{app:notation}. Running the frozen model on $\mathbf{x}_{u,a}$ with a fixed \emph{discovery prompt}, the text instruction paired with each discovery image (``\emph{Describe this image in detail.}''), gives the target-layer activation
\[
\mathbf{H}_{u,a}\in\mathbb{R}^{T\times D}.
\]
The choice of discovery prompt is immaterial at our intervention layer: the vision-to-language projection computes its visual-token activations before any interaction with the text prompt, so the prompt architecturally cannot influence them, and re-running discovery with four alternative prompts yields subspaces identical to numerical precision (principal-angle cosines $\geq 0.9997$) and identical end-to-end results (Appendix~\ref{app:prompt_invariance}).
We summarize each image by the normalized pooled representative
\[
\mathbf{y}_{u,a}
=
\nu\left(\frac{1}{T}\sum_{t=1}^{T}\mathbf{H}_{u,a,t}\right)
\in\mathbb{S}^{D-1}.
\]
Within a fixed context $u$, varying $a$ is designed to emphasize protected-attribute variation while keeping other visual factors fixed. Pooling yields one stable spherical representative per counterfactual image.

\paragraph{Subspace estimation}
For each fixed context $u$, let the spherical Fr\'echet mean be
\[
\mathbf{c}_u
\in
\arg\min_{\mathbf{z}\in\mathbb{S}^{D-1}}
\sum_{a\in\mathcal{A}} d(\mathbf{z},\mathbf{y}_{u,a})^2.
\]
The protected-attribute shift for value $a$ is
\[
\mathbf{s}_{u,a}=\log_{\mathbf{c}_u}(\mathbf{y}_{u,a}).
\]
Stacking the shifts $\{\mathbf{s}_{u,a}:u\in\mathcal{U},a\in\mathcal{A}\}$ as rows gives
the shift matrix
\[
\mathbf{S}\in\mathbb{R}^{N_{\mathrm{shifts}}\times D},
\qquad
N_{\mathrm{shifts}}=|\mathcal{U}||\mathcal{A}|.
\]
Each row of $\mathbf{S}$ is one tangent shift induced by changing the protected
attribute within a fixed context. We then apply SVD:
\begin{equation}
\mathbf{S}=\mathbf{U}\boldsymbol{\Sigma}\mathbf{V}^{\top},
\qquad
\mathbf{V}_{\mathrm{bias}}=[\boldsymbol{v}_1,\ldots,\boldsymbol{v}_k]\in\mathbb{R}^{D\times k}.
\label{eq:ggss-svd-basis}
\end{equation}
Here $\boldsymbol{v}_j$ is the $j$-th column of $\mathbf{V}$. Thus $\mathbf{V}_{\mathrm{bias}}$ is the matrix of the top-$k$ right singular vectors of $\mathbf{S}$, and $\operatorname{span}(\mathbf{V}_{\mathrm{bias}})$ is the learned protected-attribute subspace. Since the columns of $\mathbf{V}$ are orthonormal, $\mathbf{V}_{\mathrm{bias}}^{\top}\mathbf{V}_{\mathrm{bias}}=I_k$. We set $k=|\mathcal{A}|-1$ by default. The global reference point is the Fr\'echet mean of all discovery representatives:
\[
\boldsymbol{\mu}
\in
\arg\min_{\mathbf{z}\in\mathbb{S}^{D-1}}
\sum_{u\in\mathcal{U}}\sum_{a\in\mathcal{A}}
d(\mathbf{z},\mathbf{y}_{u,a})^2.
\]
Together, $\mathbf{V}_{\mathrm{bias}}$ and $\boldsymbol{\mu}$ parameterize the steering geometry. Unlike classifier-based discovery methods, the subspace is estimated directly from counterfactual activation differences and requires no supervised probe.

\paragraph{Gate calibration statistics}
\label{sec:token-steering}
For each discovery representative $\mathbf{c}\in\{\mathbf{y}_{u,a}:u\in\mathcal{U},a\in\mathcal{A}\}$, compute
\[
b(\mathbf{c})=\|\mathbf{V}_{\mathrm{bias}}^{\top}\log_{\boldsymbol{\mu}}(\mathbf{c})\|_2,
\]
where $\mathbf{V}_{\mathrm{bias}}^{\top}\log_{\boldsymbol{\mu}}(\mathbf{c})$ gives the coordinates of $\mathbf{c}$, after mapping to the reference point $\boldsymbol{\mu}$, along the learned protected-attribute basis. Thus $b(\mathbf{c})$ is the protected-coordinate magnitude of the discovery representative. We store
\[
\tilde{b}=\operatorname{median}\{b(\mathbf{c})\},
\qquad
\sigma_b=\operatorname{std}\{b(\mathbf{c})\}.
\]
These scalars calibrate the token gate at inference time.

\subsection{Token-Level Geodesic Steering}
\label{sec:steering}

At inference time, the new input produces a target-layer activation tensor
\[
\mathbf{Z}\in\mathbb{R}^{n_{\mathrm{batch}}\times T'\times D}.
\]
We flatten it to token vectors $\{\mathbf{h}_i\}_{i=1}^{N}$, with $N=n_{\mathrm{batch}}T'$, and process each nonzero token independently. Let
\[
r_i=\|\mathbf{h}_i\|_2,
\qquad
\widehat{\mathbf{h}}_i=\nu(\mathbf{h}_i),
\qquad
\mathbf{t}_i=\log_{\boldsymbol{\mu}}(\widehat{\mathbf{h}}_i).
\]
The radius $r_i$ is kept outside the debiasing step and restored at the end.

\paragraph{Projection and target direction}
Because $\mathbf{V}_{\mathrm{bias}}$ has orthonormal columns, $\mathbf{V}_{\mathrm{bias}}\mathbf{V}_{\mathrm{bias}}^{\top}$ is the orthogonal projector onto $\operatorname{span}(\mathbf{V}_{\mathrm{bias}})$. The protected-coordinate component and its orthogonal complement are
\begin{equation}
\begin{aligned}
\mathbf{p}_i
&=
\mathbf{V}_{\mathrm{bias}}\mathbf{V}_{\mathrm{bias}}^{\top}\mathbf{t}_i,
\\
\mathbf{t}_i^{\mathrm{clean}}
&=
(I-\mathbf{V}_{\mathrm{bias}}\mathbf{V}_{\mathrm{bias}}^{\top})\mathbf{t}_i.
\end{aligned}
\label{eq:ggss-projection}
\end{equation}
Here $\mathbf{p}_i$ is the component removed by \modelname, while $\mathbf{t}_i^{\mathrm{clean}}$ is the steering coordinate after zeroing the learned protected-attribute coordinates.
Mapping the projected tangent vector back to the sphere gives the target direction
\[
\widehat{\mathbf{h}}_i^{\mathrm{target}}
=
\exp_{\boldsymbol{\mu}}(\mathbf{t}_i^{\mathrm{clean}}).
\]
This defines the target direction by first projecting in the steering coordinates and then mapping the projected coordinate back to the sphere.

\paragraph{Calibrated token gate}
The token-specific gate is
\begin{equation}
\begin{aligned}
z_i&=
\frac{\|\mathbf{V}_{\mathrm{bias}}^{\top}\mathbf{t}_i\|_2-\tilde{b}}
{\sigma_b+\varepsilon},
\\
g_i&=
g_{\mathrm{floor}}
+
(1-g_{\mathrm{floor}})
\operatorname{sigmoid}(\kappa z_i),
\end{aligned}
\label{eq:ggss-gate}
\end{equation}
where $\varepsilon>0$ avoids division by zero, $\kappa$ controls gate sharpness, and $g_{\mathrm{floor}}$ sets the minimum steering strength. Since $\mathbf{V}_{\mathrm{bias}}$ has orthonormal columns, $\|\mathbf{V}_{\mathrm{bias}}^{\top}\mathbf{t}_i\|_2=\|\mathbf{p}_i\|_2$. Thus the gate compares the token's protected-coordinate magnitude with the discovery statistics $(\tilde{b},\sigma_b)$. Tokens with larger protected-coordinate magnitude receive stronger steering, while low-bias tokens stay closer to their original directions.

\paragraph{Geodesic update}
Set $\beta_i=\alpha g_i$ and update the token direction by
\begin{equation}
\widehat{\mathbf{h}}_i^{\mathrm{steered}}
=
\operatorname{Slerp}(\widehat{\mathbf{h}}_i,\widehat{\mathbf{h}}_i^{\mathrm{target}};\beta_i),
\label{eq:ggss-slerp-update}
\end{equation}
then restore the original radius:
\begin{equation}
\mathbf{h}_i^{\mathrm{steered}}
=
r_i\widehat{\mathbf{h}}_i^{\mathrm{steered}}.
\label{eq:ggss-norm-restore}
\end{equation}
When $\beta_i=0$ the direction is unchanged, while $\beta_i=1$ reaches the projected target. Larger values continue along the same geodesic, allowing the steering strength $\alpha$ to control how aggressively the protected-coordinate component is reduced.

The following proposition records the mechanical guarantees of the update:
GGSS removes the learned protected-attribute coordinates by an orthogonal
projection in the steering coordinates, and the final spherical update
preserves the original token norm exactly. Whether the steering reduces bias
is an empirical question, addressed in \S\ref{sec:experiments}.

\begin{proposition}
\label{prop:ggss-projection-norm}
For the basis $\mathbf{V}_{\mathrm{bias}}$ defined in Eq.~\eqref{eq:ggss-svd-basis}
and the token update defined in
Eqs.~\eqref{eq:ggss-projection}--\eqref{eq:ggss-norm-restore}, the projected
coordinate $\mathbf{t}_i^{\mathrm{clean}}$ is the Euclidean projection of $\mathbf{t}_i$ onto
the set of vectors whose protected-attribute coordinates are zero:
\[
\mathbf{t}_i^{\mathrm{clean}}
=
\arg\min_{\mathbf{z}\in\mathbb{R}^{D}}
\left\{
\|\mathbf{z}-\mathbf{t}_i\|_2^2:
\mathbf{V}_{\mathrm{bias}}^{\top}\mathbf{z}=\mathbf{0}
\right\}.
\]
Moreover, if $\widehat{\mathbf{h}}_i^{\mathrm{target}}\neq -\widehat{\mathbf{h}}_i$, with the convention $\operatorname{Slerp}(\mathbf{p},\mathbf{p};\beta)=\mathbf{p}$, then
\[
\|\mathbf{h}_i^{\mathrm{steered}}\|_2=\|\mathbf{h}_i\|_2.
\]
\end{proposition}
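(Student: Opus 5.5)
The argument has two independent parts: a standard orthogonal-projection fact in $\mathbb{R}^D$, and a norm computation for Slerp on the sphere.

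\emph{Projection claim.} Write $P=\mathbf{V}_{\mathrm{bias}}\mathbf{V}_{\mathrm{bias}}^{\top}$. Since $\mathbf{V}_{\mathrm{bias}}^{\top}\mathbf{V}_{\mathrm{bias}}=I_k$ (Eq.~\eqref{eq:ggss-svd-basis}), $P$ is symmetric and idempotent. The feasible set $\mathcal{K}=\{\mathbf{z}:\mathbf{V}_{\mathrm{bias}}^{\top}\mathbf{z}=\mathbf{0}\}$ is the linear subspace $\operatorname{span}(\mathbf{V}_{\mathrm{bias}})^{\perp}$.
\begin{itemize}
\item \textbf{Feasibility.} $\mathbf{V}_{\mathrm{bias}}^{\top}\mathbf{t}_i^{\mathrm{clean}}=\mathbf{V}_{\mathrm{bias}}^{\top}\mathbf{t}_i-\mathbf{V}_{\mathrm{bias}}^{\top}\mathbf{t}_i=\mathbf{0}$, so $\mathbf{t}_i^{\mathrm{clean}}\in\mathcal{K}$.
\item \textbf{Optimality.} For any $\mathbf{z}\in\mathcal{K}$, split $\mathbf{z}-\mathbf{t}_i=(\mathbf{z}-\mathbf{t}_i^{\mathrm{clean}})-\mathbf{p}_i$. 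The first term lies in $\mathcal{K}$, and $\mathbf{p}_i=\mathbf{V}_{\mathrm{bias}}(\mathbf{V}_{\mathrm{bias}}^{\top}\mathbf{t}_i)$ lies in $\operatorname{span}(\mathbf{V}_{\mathrm{bias}})$. These are orthogonal, so Pythagoras gives
\[
\|\mathbf{z}-\mathbf{t}_i\|_2^2=\|\mathbf{z}-\mathbf{t}_i^{\mathrm{clean}}\|_2^2+\|\mathbf{p}_i\|_2^2 .
\]
This is minimized uniquely at $\mathbf{z}=\mathbf{t}_i^{\mathrm{clean}}$, which justifies writing $\arg\min$ as an equality.
\end{itemize}

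\emph{Norm claim.} By Eq.~\eqref{eq:ggss-norm-restore}, $\|\mathbf{h}_i^{\mathrm{steered}}\|_2=r_i\|\widehat{\mathbf{h}}_i^{\mathrm{steered}}\|_2$ and $r_i=\|\mathbf{h}_i\|_2$. It therefore suffices to show $\|\operatorname{Slerp}(\mathbf{p},\mathbf{q};\beta)\|_2=1$ for unit vectors $\mathbf{p}=\widehat{\mathbf{h}}_i$ and $\mathbf{q}=\widehat{\mathbf{h}}_i^{\mathrm{target}}$, for every real $\beta$, including $\beta>1$. Both are unit vectors: $\widehat{\mathbf{h}}_i=\nu(\mathbf{h}_i)$ for a nonzero token, and $\exp_{\boldsymbol{\mu}}$ maps into $\mathbb{S}^{D-1}$ because $\boldsymbol{\mu}\perp\mathbf{t}$ for tangent $\mathbf{t}$ and $\cos^2+\sin^2=1$.

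The tangency of $\mathbf{t}_i^{\mathrm{clean}}$ at $\boldsymbol{\mu}$ is the one point that needs care. The $\log_{\boldsymbol{\mu}}$ output is orthogonal to $\boldsymbol{\mu}$, but the projection could in principle move it off $T_{\boldsymbol{\mu}}$. I would handle this in one of two ways. Either observe that the formula $\cos(\|\mathbf{t}\|)\boldsymbol{\mu}+\sin(\|\mathbf{t}\|)\mathbf{t}/\|\mathbf{t}\|$ still has unit norm whenever $\mathbf{t}\perp\boldsymbol{\mu}$, and argue $\boldsymbol{\mu}\perp\operatorname{span}(\mathbf{V}_{\mathrm{bias}})$ approximately. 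Or, more robustly, simply note that the paper's norm claim needs only that $\mathbf{q}$ is a unit vector after normalization. I expect this to be the main obstacle, and I would state the assumption that the target lies on the sphere, as the notation $\widehat{\cdot}$ asserts.

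\emph{Slerp computation.} Cases $\mathbf{p}=\mathbf{q}$ and $\mathbf{q}=-\mathbf{p}$:
\begin{itemize}
\item If $\mathbf{p}=\mathbf{q}$, the stated convention gives $\mathbf{p}$, which has unit norm.
\item $\mathbf{q}=-\mathbf{p}$ is excluded by hypothesis.
\end{itemize}
Otherwise $\theta=\arccos\langle\mathbf{p},\mathbf{q}\rangle\in(0,\pi)$, so $\sin\theta\neq0$. Set $a=\sin((1-\beta)\theta)/\sin\theta$ and $b=\sin(\beta\theta)/\sin\theta$. Then
\[
\|a\mathbf{p}+b\mathbf{q}\|^2=a^2+b^2+2ab\cos\theta .
\]
To see this equals $1$ without grinding, write $\mathbf{q}=\cos\theta\,\mathbf{p}+\sin\theta\,\mathbf{e}$ with $\mathbf{e}\perp\mathbf{p}$ a unit vector. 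Expanding the sines,
\[
a\mathbf{p}+b\mathbf{q}=\cos(\beta\theta)\,\mathbf{p}+\sin(\beta\theta)\,\mathbf{e},
\]
which has unit norm for all $\beta$. Multiplying by $r_i$ completes the proof.
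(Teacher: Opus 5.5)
Your proof is correct and follows the paper's own argument. The paper likewise identifies $I-\mathbf{V}_{\mathrm{bias}}\mathbf{V}_{\mathrm{bias}}^{\top}$ as the orthogonal projector onto $\{\mathbf{z}:\mathbf{V}_{\mathrm{bias}}^{\top}\mathbf{z}=\mathbf{0}\}$, with your Pythagoras step just making the nearest-point property explicit, and rewrites Slerp as $\cos(\beta_i\theta)\mathbf{p}+\sin(\beta_i\theta)\mathbf{u}$ with $\mathbf{u}=(\mathbf{q}-\cos\theta\,\mathbf{p})/\sin\theta$.

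Your tangency worry is well founded, and the paper's proof passes over it silently: $\mathbf{u}$ is a unit vector only if $\|\widehat{\mathbf{h}}_i^{\mathrm{target}}\|_2=1$, which is not automatic. The columns of $\mathbf{V}_{\mathrm{bias}}$ lie in the span of shifts tangent at the $\mathbf{c}_u$ rather than at $\boldsymbol{\mu}$, so $\langle\mathbf{t}_i^{\mathrm{clean}},\boldsymbol{\mu}\rangle=-\langle\mathbf{V}_{\mathrm{bias}}^{\top}\mathbf{t}_i,\mathbf{V}_{\mathrm{bias}}^{\top}\boldsymbol{\mu}\rangle$ need not vanish. Keep your second option, assuming the target is a unit vector or enforcing it as the implementation's $\operatorname{Normalize}_{\varepsilon}$ does, and drop the approximate-orthogonality route, which cannot give exact norm equality.
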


The proof is given in Appendix~\ref{app:proof-ggss-projection-norm}.

The proposition shows that GGSS removes the learned protected-attribute coordinates by an orthogonal projection and preserves the token norm after the spherical update.

The steered tokens are reshaped back into the activation tensor and passed to the remaining layers of the frozen VLM. Because the gate is token-dependent, low-bias tokens are changed less, while tokens with larger protected-coordinate magnitude are steered more strongly.

\section{Experiments}
\label{sec:experiments}

\begin{table*}[!ht]
\centering
\setlength{\tabcolsep}{2.5pt}
\renewcommand{\arraystretch}{1.05}
\scriptsize

\providecommand{\dual}[2]{\begin{tabular}{@{}r@{}}#1\\[-1.5pt]{\tiny\color{gray!55!black}#2}\end{tabular}}
\providecommand{\dualb}[2]{\begin{tabular}{@{}r@{}}{\bfseries\boldmath #1}\\[-1.5pt]{\tiny\color{gray!55!black}#2}\end{tabular}}
\providecommand{\dualo}[2]{\begin{tabular}{@{}r@{}}{\bfseries\boldmath #1}\\[-1.5pt]{\tiny\color{gray!55!black}{\bfseries\boldmath #2}}\end{tabular}}
\providecommand{\dualob}[2]{\begin{tabular}{@{}r@{}}{\bfseries\boldmath #1}\\[-1.5pt]{\tiny\color{gray!55!black}{\bfseries\boldmath #2}}\end{tabular}}
\providecommand{\avg}[1]{#1}
\providecommand{\avgb}[1]{{\bfseries\boldmath #1}}
\providecommand{\avgo}[1]{{\bfseries\boldmath #1}}
\providecommand{\avgob}[1]{{\bfseries\boldmath #1}}
\providecommand{\invalidrun}{\ensuremath{^\ast}}

\resizebox{\textwidth}{!}{%
\begin{tabular}{@{}l rrr r rrr r rrr r rrr r@{}}
\toprule[1.2pt]
& \multicolumn{4}{c}{\textbf{Pixtral-12B}} & \multicolumn{4}{c}{\textbf{LLaVA-1.6-Vicuna-7B}} & \multicolumn{4}{c}{\textbf{LLaVA-1.6-Mistral-7B}} & \multicolumn{4}{c}{\textbf{Qwen3-VL-4B}} \\
\cmidrule(lr){2-5}\cmidrule(lr){6-9}\cmidrule(lr){10-13}\cmidrule(lr){14-17}
\textbf{Method}
  & \makecell{MCQ\\$\times 10^3 \downarrow$}
  & \makecell{2AFC\\$\downarrow$}
  & \makecell{N/D\\$\downarrow$}
  & \makecell{Avg\\$\Delta\%\downarrow$}
  & \makecell{MCQ\\$\times 10^3 \downarrow$}
  & \makecell{2AFC\\$\downarrow$}
  & \makecell{N/D\\$\downarrow$}
  & \makecell{Avg\\$\Delta\%\downarrow$}
  & \makecell{MCQ\\$\times 10^3 \downarrow$}
  & \makecell{2AFC\\$\downarrow$}
  & \makecell{N/D\\$\downarrow$}
  & \makecell{Avg\\$\Delta\%\downarrow$}
  & \makecell{MCQ\\$\times 10^3 \downarrow$}
  & \makecell{2AFC\\$\downarrow$}
  & \makecell{N/D\\$\downarrow$}
  & \makecell{Avg\\$\Delta\%\downarrow$} \\
\midrule
Baseline  & 25.75 & 0.243 & 0.425 & $0\%$ & 8.70 & 0.000 & 0.600 & $0\%$ & 26.87 & 0.000 & 0.625 & $0\%$ & 14.65 & 0.368 & 0.400 & $0\%$ \\
\midrule
\multicolumn{17}{@{}l@{}}{\emph{INLP \cite{ravfogel2020null}}} \\
INLP (Eucl.)
  & \dual{15.05}{$-42\%$}
  & \dual{0.131}{$-46\%$}
  & \dual{0.350}{$-18\%$}
  & \avg{$-35\%$}
  & \dual{1.87}{$-79\%$}
  & \dual{0.000}{--}
  & \dual{0.625}{$+4\%$}
  & \avg{$-37\%$}
  & \dual{10.52}{$-61\%$}
  & \dual{0.000}{--}
  & \dual{0.375}{$-40\%$}
  & \avg{$-50\%$}
  & \dual{5.71}{$-61\%$}
  & \dual{0.279}{$-24\%$}
  & \dual{0.375}{$-6\%$}
  & \avg{$-31\%$} \\
INLP (sph.)
  & \dualb{0.47}{$-98\%$}
  & \dual{0.297}{$+22\%$}
  & \dual{0.350}{$-18\%$}
  & \avg{$-31\%$}
  & \dual{2.05}{$-76\%$}
  & \dual{0.000}{--}
  & \dualb{0.025}{$-96\%$}
  & \avg{$-86\%$}
  & \dual{9.87}{$-63\%$}
  & \dual{0.000}{--}
  & \dualb{0.050}{$-92\%$}
  & \avg{$-78\%$}
  & \dual{11.00}{$-25\%$}
  & \dual{0.238}{$-35\%$}
  & \dualb{0.050}{$-88\%$}
  & \avg{$-49\%$} \\
\midrule
\multicolumn{17}{@{}l@{}}{\emph{MeanDiff}} \\
MeanDiff-SVM (Eucl.)
  & \dual{25.75}{$+0\%$}
  & \dual{0.247}{$+2\%$}
  & \dual{0.425}{$+0\%$}
  & \avg{$+1\%$}
  & \dual{0.86}{$-90\%$}
  & \dual{0.000}{--}
  & \dual{0.600}{$+0\%$}
  & \avg{$-45\%$}
  & \dual{11.05}{$-59\%$}
  & \dual{0.000}{--}
  & \dual{0.625}{$+0\%$}
  & \avg{$-29\%$}
  & \dual{5.24}{$-64\%$}
  & \dual{0.368}{$+0\%$}
  & \dual{0.375}{$-6\%$}
  & \avg{$-23\%$} \\
MeanDiff-SVM (sph.)
  & \dual{17.73}{$-31\%$}
  & \dual{0.248}{$+2\%$}
  & \dual{0.425}{$+0\%$}
  & \avg{$-10\%$}
  & \dual{2.76}{$-68\%$}
  & \dual{0.000}{--}
  & \dual{0.625}{$+4\%$}
  & \avg{$-32\%$}
  & \dual{13.67}{$-49\%$}
  & \dual{0.000}{--}
  & \dual{0.625}{$+0\%$}
  & \avg{$-25\%$}
  & \dual{14.02}{$-4\%$}
  & \dual{0.412}{$+12\%$}
  & \dual{0.375}{$-6\%$}
  & \avg{$+0\%$} \\
MeanDiff-LR (Eucl.)
  & \dual{25.75}{$+0\%$}
  & \dual{0.247}{$+2\%$}
  & \dual{0.425}{$+0\%$}
  & \avg{$+1\%$}
  & \dual{3.09}{$-64\%$}
  & \dual{0.000}{--}
  & \dual{0.600}{$+0\%$}
  & \avg{$-32\%$}
  & \dual{12.14}{$-55\%$}
  & \dual{0.000}{--}
  & \dual{0.625}{$+0\%$}
  & \avg{$-27\%$}
  & \dual{5.77}{$-61\%$}
  & \dual{0.368}{$+0\%$}
  & \dual{0.375}{$-6\%$}
  & \avg{$-22\%$} \\
MeanDiff-LR (sph.)
  & \dual{25.75}{$+0\%$}
  & \dual{0.239}{$-2\%$}
  & \dual{0.425}{$+0\%$}
  & \avg{$-1\%$}
  & \dual{6.15}{$-29\%$}
  & \dual{0.000}{--}
  & \dual{0.625}{$+4\%$}
  & \avg{$-13\%$}
  & \dual{21.21}{$-21\%$}
  & \dual{0.000}{--}
  & \dual{0.625}{$+0\%$}
  & \avg{$-11\%$}
  & \dual{13.56}{$-7\%$}
  & \dual{0.400}{$+9\%$}
  & \dual{0.400}{$+0\%$}
  & \avg{$+0\%$} \\
\midrule
\multicolumn{17}{@{}l@{}}{\emph{BendVLM \cite{gerych2024bendvlm}}} \\
BendVLM (Eucl.)
  & \dual{13.03}{$-49\%$}
  & \dual{0.267}{$+10\%$}
  & \dual{0.125}{$-71\%$}
  & \avg{$-37\%$}
  & \dual{2.71}{$-69\%$}
  & \dual{0.000}{--}
  & \dual{0.625}{$+4\%$}
  & \avg{$-32\%$}
  & \dual{10.81}{$-60\%$}
  & \dual{0.000}{--}
  & \dual{0.625}{$+0\%$}
  & \avg{$-30\%$}
  & \dual{5.31}{$-64\%$}
  & \dual{0.374}{$+2\%$}
  & \dual{0.350}{$-13\%$}
  & \avg{$-25\%$} \\
BendVLM (sph.)
  & \dual{112.03}{$+335\%$}
  & \dualb{0.000}{$-100\%$}
  & \dualb{0.000}{$-100\%$}
  & \avg{$+45\%$}
  & \dualb{0.00}{$-100\%$}
  & \dual{0.452}{--}
  & \invalidrun
  & \avg{$-100\%$}\rlap{$^\dagger$}
  & \dualb{2.83}{$-89\%$}
  & \dual{0.000}{--}
  & \dual{0.550}{$-12\%$}
  & \avg{$-51\%$}
  & \dualb{2.86}{$-80\%$}
  & \dualb{0.152}{$-59\%$}
  & \dual{0.425}{$+6\%$}
  & \avg{$-44\%$} \\
\midrule
\modelname\ (Ours)
  & \dual{16.85}{$-35\%$}
  & \dual{0.096}{$-61\%$}
  & \dual{0.125}{$-71\%$}
  & \avgb{$-55\%$}
  & \dual{1.36}{$-84\%$}
  & \dual{0.000}{--}
  & \dualb{0.025}{$-96\%$}
  & \avgb{$-90\%$}
  & \dual{8.48}{$-68\%$}
  & \dual{0.000}{--}
  & \dualb{0.050}{$-92\%$}
  & \avgb{$-80\%$}
  & \dual{5.10}{$-65\%$}
  & \dual{0.154}{$-58\%$}
  & \dual{0.175}{$-56\%$}
  & \avgb{$-60\%$} \\
\bottomrule[1.2pt]
\end{tabular}%
}
\caption{\textbf{Main results against external baselines.} Each (method, model) row uses a single best-avg-$\alpha$ selected from $\{0.25,0.5,0.75,1.0,1.5\}$. A gray ``--'' on the second line means the baseline is zero, so percent change is undefined. $^\ast$~marks runs where the steered model produced no parseable answers (BendVLM (sph.) collapses generation there, consistent with Table~\ref{tab:mmstar}; likely a difficulty of porting a CLIP-space method to generative activations, not a flaw of the original method). $^\dagger$~average over a single task, excluded from cross-method comparison. Avg\,$\Delta\%$ averages over non-zero-baseline tasks: three on Pixtral-12B and Qwen3-VL-4B, two on the LLaVA models, so it is not strictly comparable across models.}
\label{tab:main_results}
\end{table*}

\begin{table*}[!t]
\centering
\setlength{\tabcolsep}{2.5pt}
\renewcommand{\arraystretch}{1.05}
\scriptsize
\providecommand{\mmdual}[2]{\begin{tabular}{@{}r@{}}#1\\[-1.5pt]{\tiny\color{gray!55!black}#2}\end{tabular}}
\providecommand{\mmdualb}[2]{\begin{tabular}{@{}r@{}}\textbf{#1}\\[-1.5pt]{\tiny\color{gray!55!black}\textbf{#2}}\end{tabular}}
\resizebox{0.78\textwidth}{!}{%
\begin{tabular}{@{}lrrrrrrrrr@{}}
\toprule[1.2pt]
 & \multicolumn{4}{c}{\textbf{Race-task steering}} & \multicolumn{4}{c}{\textbf{Gender-task steering}} & \\
\cmidrule(lr){2-5}\cmidrule(lr){6-9}\cmidrule(l){10-10}
\textbf{Method} & \textbf{Pixtral} & \textbf{Vicuna} & \textbf{Mistral} & \textbf{Qwen3} & \textbf{Pixtral} & \textbf{Vicuna} & \textbf{Mistral} & \textbf{Qwen3} & \textbf{Avg.} \\
\midrule
Baseline & \mmdual{53.6}{$0.0\%$} & \mmdual{37.3}{$0.0\%$} & \mmdual{38.5}{$0.0\%$} & \mmdual{61.5}{$0.0\%$} & \mmdual{53.6}{$0.0\%$} & \mmdual{37.3}{$0.0\%$} & \mmdual{38.5}{$0.0\%$} & \mmdual{61.5}{$0.0\%$} & \mmdual{47.7}{$0.0\%$} \\
\midrule
INLP (Eucl.) & \mmdual{52.1}{$-1.5\%$} & \mmdual{37.2}{$-0.1\%$} & \mmdual{36.4}{$-2.1\%$} & \mmdual{62.1}{$+0.6\%$} & \mmdual{53.3}{$-0.3\%$} & \mmdual{37.6}{$+0.3\%$} & \mmdual{38.0}{$-0.5\%$} & \mmdual{62.0}{$+0.5\%$} & \mmdual{47.3}{$-0.4\%$} \\
INLP (sph.) & \mmdual{47.5}{$-6.1\%$} & \mmdual{36.8}{$-0.5\%$} & \mmdual{38.4}{$-0.1\%$} & \mmdualb{62.4}{$+0.9\%$} & \mmdual{50.3}{$-3.3\%$} & \mmdual{37.1}{$-0.2\%$} & \mmdual{38.6}{$+0.1\%$} & \mmdual{61.3}{$-0.2\%$} & \mmdual{46.5}{$-1.2\%$} \\
MeanDiff-SVM (Eucl.) & \mmdual{53.3}{$-0.3\%$} & \mmdual{37.1}{$-0.2\%$} & \mmdual{39.0}{$+0.5\%$} & \mmdual{61.4}{$-0.1\%$} & \mmdual{53.1}{$-0.5\%$} & \mmdual{37.4}{$+0.1\%$} & \mmdual{38.2}{$-0.3\%$} & \mmdual{61.9}{$+0.4\%$} & \mmdual{47.7}{$-0.1\%$} \\
MeanDiff-SVM (sph.) & \mmdual{53.2}{$-0.4\%$} & \mmdual{37.3}{$+0.0\%$} & \mmdual{39.2}{$+0.7\%$} & \mmdual{61.8}{$+0.3\%$} & \mmdual{53.1}{$-0.5\%$} & \mmdual{37.2}{$-0.1\%$} & \mmdual{39.0}{$+0.5\%$} & \mmdual{61.8}{$+0.3\%$} & \mmdualb{47.8}{$+0.1\%$} \\
MeanDiff-LR (Eucl.) & \mmdualb{53.6}{$+0.0\%$} & \mmdual{36.9}{$-0.4\%$} & \mmdual{38.7}{$+0.2\%$} & \mmdual{61.4}{$-0.1\%$} & \mmdualb{53.7}{$+0.1\%$} & \mmdual{37.2}{$-0.1\%$} & \mmdual{38.8}{$+0.3\%$} & \mmdual{61.8}{$+0.3\%$} & \mmdualb{47.8}{$+0.0\%$} \\
MeanDiff-LR (sph.) & \mmdual{53.0}{$-0.6\%$} & \mmdual{37.1}{$-0.2\%$} & \mmdual{39.4}{$+0.9\%$} & \mmdual{61.7}{$+0.2\%$} & \mmdual{53.2}{$-0.4\%$} & \mmdual{36.6}{$-0.7\%$} & \mmdual{38.8}{$+0.3\%$} & \mmdual{61.9}{$+0.4\%$} & \mmdual{47.7}{$+0.0\%$} \\
BendVLM (Eucl.) & \mmdual{53.0}{$-0.6\%$} & \mmdual{37.1}{$-0.2\%$} & \mmdual{39.0}{$+0.5\%$} & \mmdual{61.6}{$+0.1\%$} & \mmdual{45.1}{$-8.5\%$} & \mmdualb{37.8}{$+0.5\%$} & \mmdual{38.6}{$+0.1\%$} & \mmdual{61.7}{$+0.2\%$} & \mmdual{46.7}{$-1.0\%$} \\
BendVLM (sph.) & \mmdual{6.9}{$-46.7\%$} & \mmdual{2.9}{$-34.4\%$} & \mmdualb{39.5}{$+1.0\%$} & \mmdual{59.0}{$-2.5\%$} & \mmdual{5.3}{$-48.3\%$} & \mmdual{2.6}{$-34.7\%$} & \mmdualb{39.8}{$+1.3\%$} & \mmdual{59.3}{$-2.2\%$} & \mmdual{26.9}{$-20.8\%$} \\
\midrule
Pooled SVD (Eucl.) & \mmdual{47.3}{$-6.3\%$} & \mmdual{35.0}{$-2.3\%$} & \mmdual{38.9}{$+0.4\%$} & \mmdual{60.8}{$-0.7\%$} & \mmdual{46.5}{$-7.1\%$} & \mmdual{37.5}{$+0.2\%$} & \mmdual{39.0}{$+0.5\%$} & \mmdual{61.8}{$+0.3\%$} & \mmdual{45.9}{$-1.9\%$} \\
Pooled SVD (sph.) & \mmdual{53.5}{$-0.1\%$} & \mmdualb{38.2}{$+0.9\%$} & \mmdual{38.5}{$+0.0\%$} & \mmdual{61.7}{$+0.2\%$} & \mmdual{53.3}{$-0.3\%$} & \mmdual{37.2}{$-0.1\%$} & \mmdual{38.4}{$-0.1\%$} & \mmdual{61.8}{$+0.3\%$} & \mmdualb{47.8}{$+0.1\%$} \\
Per-token SVD (Eucl.) & \mmdual{42.0}{$-11.6\%$} & \mmdual{37.2}{$-0.1\%$} & \mmdual{37.7}{$-0.8\%$} & \mmdual{60.2}{$-1.3\%$} & \mmdual{22.1}{$-31.5\%$} & \mmdual{36.9}{$-0.4\%$} & \mmdual{38.5}{$+0.0\%$} & \mmdual{61.0}{$-0.5\%$} & \mmdual{42.0}{$-5.8\%$} \\
Per-token SVD (sph.) & \mmdual{50.1}{$-3.5\%$} & \mmdual{36.2}{$-1.1\%$} & \mmdual{38.1}{$-0.4\%$} & \mmdual{61.1}{$-0.4\%$} & \mmdual{50.9}{$-2.7\%$} & \mmdual{37.4}{$+0.1\%$} & \mmdual{38.2}{$-0.3\%$} & \mmdual{61.2}{$-0.3\%$} & \mmdual{46.6}{$-1.1\%$} \\
\midrule
\textbf{\modelname\ (ours)} & \mmdual{\textbf{53.2}}{\textbf{$-0.4\%$}} & \mmdual{\textbf{37.8}}{\textbf{$+0.5\%$}} & \mmdual{\textbf{38.9}}{\textbf{$+0.4\%$}} & \mmdual{\textbf{61.6}}{\textbf{$+0.1\%$}} & \mmdual{\textbf{53.3}}{\textbf{$-0.3\%$}} & \mmdual{\textbf{37.1}}{\textbf{$-0.2\%$}} & \mmdual{\textbf{38.4}}{\textbf{$-0.1\%$}} & \mmdualb{62.1}{$+0.6\%$} & \mmdualb{47.8}{$+0.1\%$} \\
\bottomrule
\end{tabular}
}
\caption{\textbf{MMStar capability preservation at best-avg-$\alpha$.} Best-avg-$\alpha$ choices are listed in Appendix~\ref{app:hparams}.}
\label{tab:mmstar}
\vspace{-0.4em}
\end{table*}

\begin{table}[!ht]
\centering
\setlength{\tabcolsep}{4.0pt}
\renewcommand{\arraystretch}{1.05}
\scriptsize

\begin{tabular}{@{}l r r@{}}
\toprule[1.2pt]
\multicolumn{3}{c}{\textbf{Component ablation on Qwen3-VL-4B, MCQ (race)}} \\
\midrule
\textbf{Variant} & JSD$\times$1k $\downarrow$ & Bias Red.\ \% $\uparrow$ \\
\midrule
Baseline (unsteered)                                & 14.646       & --              \\
Pooled SVD (Eucl.; hard proj., no gate)             & 4.261        & 70.9            \\
Pooled SVD (sph.; hard proj., no gate)              & 5.612        & 61.7            \\
\modelname\ w/o gate ($g_i\!\equiv\!1$)              & 5.612        & 61.7            \\
\modelname\ w/o Slerp (hard proj.\ + gate)           & 4.533        & 69.1            \\
\modelname\ \textbf{(full)}                         & \best{2.414} & \best{83.5}     \\
\bottomrule[1.2pt]
\end{tabular}
\caption{\textbf{Component ablation of \modelname} on Qwen3-VL-4B, MCQ salary (race) at $\alpha=1.0$. }
\label{tab:component_ablation}
\end{table}

\begin{table}[!ht]
\centering
\setlength{\tabcolsep}{8.0pt}
\renewcommand{\arraystretch}{1.05}
\scriptsize
\begin{tabular}{@{}l c c c c@{}}
\toprule[1.2pt]
\multicolumn{5}{c}{\textbf{Sensitivity to $(\kappa, g_{\mathrm{floor}})$ on Qwen3-VL-4B, MCQ}} \\
\midrule
$g_{\mathrm{floor}}\!\setminus\!\kappa$ & $1$ & $2$ & $5$ & $10$ \\
\midrule
$0.0$  & 4.851 &  4.578 & 10.971 &  9.196 \\
$0.3$  & 4.578 &  5.781 & \best{2.414} &  4.533 \\
$0.5$  & 6.128 &  6.714 &  4.578 &  4.069 \\
\bottomrule[1.2pt]
\end{tabular}
\caption{\textbf{Sensitivity of \modelname} to gate sharpness $\kappa$ and floor $g_{\mathrm{floor}}$ on Qwen3-VL-4B, MCQ salary (race) at $\alpha=1.0$. Entries are JSD$\times$1k; lower is better.}
\label{tab:sensitivity_ablation}
\end{table}

\subsection{Setup}
\label{sec:exp_setup}

\paragraph{Models}
We evaluate \modelname\ on four generative VLMs spanning three model families and a range of parameter scales: Pixtral-12B (12B) \cite{agrawal2024pixtral}, LLaVA-1.6-Vicuna-7B (7B), LLaVA-1.6-Mistral-7B (7B)~\cite{liu2024llavanext}, and Qwen3-VL-4B-Instruct (4B)~\cite{bai2025qwen3}. For every method, we intervene at the same late vision-to-language projection layer so that differences reflect the steering rule rather than the hook location. Checkpoint identifiers, exact hook paths, activations, and results for additional hook sites appear in Appendix~\ref{app:layers}.

\paragraph{Counterfactual images}
The counterfactual image sets are not generated by us: we use the $480$ real-photograph, face-only counterfactual images released with the REFLECT/FOCUS dataset \cite{chen2026reflect}, covering $6$ occupations $\times$ $8$ source identities $\times$ $5$ perceived races $\times$ $2$ genders. Each identity's variants are pixel-aligned, so activation differences isolate the protected attribute; discovery occupations are disjoint from the evaluated occupation (\emph{Protocol} below).

\paragraph{Tasks and metrics}
We evaluate with three bias protocols on categorical and pairwise attributes. For each counterfactual image, the probes ask:
\begin{itemize}[nolistsep,left=8pt]
    \item \textbf{MCQ} (race, multiple choice; REFLECT \cite{chen2026reflect}): a salary/education multiple-choice question about the pictured person, scored by the Jensen--Shannon divergence of answer distributions across races. The metric is \texttt{mean\_race\_jsd}$\times 10^3$.
    \item \textbf{2AFC} (race, pairwise; REFLECT \cite{chen2026reflect}): a two-alternative forced choice (``which of two versions of the same person has higher income'') on race-paired images. The metric is \texttt{race\_bias\_std}.
    \item \textbf{Nurse/Doctor} (gender, classification): a single-word occupation query on nurse/doctor images. The metric is the gender gap in correct classification, $|P(\text{nurse}|\text{man})-P(\text{nurse}|\text{woman})|$.
\end{itemize}
Two of the four models (LLaVA-1.6-Vicuna-7B and LLaVA-1.6-Mistral-7B) exhibit zero baseline 2AFC bias. Steered values are reported for completeness but do not contribute to the average. We also report accuracy on MMStar \cite{chen2024we}, a 1{,}500-question multiple-choice benchmark of general multimodal capability spanning six dimensions from coarse/fine-grained perception to reasoning, math, and science, scored by exact matching with VLMEvalKit \cite{duan2024vlmevalkit}; per-dimension breakdown in Appendix~\ref{app:mmstar_dims}.

\paragraph{Baselines}
This work \emph{re-implements and adapts} existing inference-time debiasers from CLIP- and LM-era settings to generative VLMs. These methods were originally developed for static word embeddings, BERT-style encoders, or CLIP dual-encoders; here, we make them target the same late vision-to-language projection layer for a controlled comparison. Concretely, we compare \modelname\ against ten external steering baselines spanning four families:
\begin{itemize}[nolistsep,left=8pt]
    \item \textbf{INLP} \cite{ravfogel2020null}: iterative linear-probe null-space stacking, ported from word-embedding debiasing, with Euclidean and spherical variants.
    \item \textbf{MeanDiff}: classifier-guided mean-shift correction in the style of \citet{zou2023representation} probing. We use both RBF-SVM and multinomial logistic-regression probes, each with Euclidean and spherical variants trained on pooled visual activations.
    \item \textbf{BendVLM} \cite{gerych2024bendvlm}: we adapt BendVLM's CLIP-embedding retrieval and Lagrangian equalization from the shared text--image embedding space to generative-VLM hidden activations, evaluating both Euclidean and spherical variants.
    \item \textbf{LEACE} \cite{belrose2023leace}: the closed-form least-squares concept eraser, fit with the authors' official implementation, as a direct spherical port and combined with our calibrated gate (Appendix~\ref{app:baselines_impl}).
\end{itemize}
We additionally evaluate a \emph{prompt-space} family: the identical probes with a fairness instruction appended to every prompt, with no steering (\S\ref{sec:added_baselines}). All adapted implementations are released together with this paper as a benchmark suite for inference-time debiasing of generative VLMs.

\paragraph{Protocol}
For every (method, model) pair we sweep $\alpha\in\{0.25,0.5,0.75,1.0,1.5\}$ across the bias tasks and pick a \emph{single} $\alpha$ that minimizes the unweighted mean per-task \% change relative to the unsteered baseline (``best-avg-$\alpha$'' protocol). Only tasks with non-zero baselines contribute to the average. All bias metrics, the Avg\,$\Delta\%$ column, and MMStar for that method and model are reported at this selected operating point. For \modelname\ we fix $\kappa=5$ and $g_{\mathrm{floor}}=0.3$ throughout, with sensitivity analyses in \S\ref{sec:ablation} and Appendix~\ref{app:hparams}. Discovery uses counterfactual image sets whose identities are held out from the evaluation occupation. For the race-based tasks we discover on $\{\texttt{cook},\texttt{doctor},\texttt{lawyer},\texttt{nurse},\texttt{teacher}\}$, and for the gender-based Nurse/Doctor task we discover on $\{\texttt{cook},\texttt{lawyer},\texttt{teacher}\}$ while holding out both \texttt{nurse} and \texttt{doctor}. MCQ and 2AFC are scored on the SocialCounterfactuals probe suite \cite{howard2024socialcounterfactuals}, which does not share identities with the discovery pool.

\paragraph{Reading the tables}
Tables~\ref{tab:main_results} and~\ref{tab:mmstar} should be read as operating-point comparisons: for each method and model, the same selected $\alpha$ is used for all bias metrics and for MMStar. They separate two questions that are often conflated: whether a steering rule reduces demographic sensitivity, and whether the same operating point preserves general multimodal reasoning. A useful method must do both.

\subsection{Main Results}
\label{sec:main_results}

Table~\ref{tab:main_results} reports best-avg-$\alpha$ metrics for every method across all models and bias tasks; \modelname\ is the \emph{single most reliable} method in the comparison.

\paragraph{Ranking}
\modelname\ attains the lowest Avg\,$\Delta\%$ among all external baselines on all $4$ models. The gaps are $-55\%$ vs.\ $-37\%$ on Pixtral-12B, where BendVLM (Eucl.) is the strongest external baseline, then $-90\%$ vs.\ $-86\%$ on LLaVA-1.6-Vicuna-7B, $-80\%$ vs.\ $-78\%$ on LLaVA-1.6-Mistral-7B, and $-60\%$ vs.\ $-49\%$ on Qwen3-VL-4B, all three against INLP (sph.), the strongest competitor overall.

\paragraph{Magnitude}
\modelname\ produces substantial bias reductions: up to $-96\%$ on Nurse/Doctor (LLaVA-1.6-Vicuna-7B: $0.600\rightarrow 0.025$), $-84\%$ on MCQ (LLaVA-1.6-Vicuna-7B: $8.70\rightarrow 1.36$), and $-61\%$ on 2AFC (Pixtral-12B: $0.243\rightarrow 0.096$).

MCQ and Nurse/Doctor measure different output formats---categorical answer distributions and binary occupation classification---yet \modelname\ reduces both without changing the selected operating point for a given model; on the two LLaVA models the zero-baseline 2AFC entries serve as sanity checks.

\paragraph{Capability preservation (MMStar)}
Table~\ref{tab:mmstar} reports MMStar results at each method's best-avg-$\alpha$ (the same $\alpha$ used to report its bias metrics in Table~\ref{tab:main_results}) across all steering settings. \modelname\ stays within $\pm 0.6$\,p.p.\ of baseline across these evaluations: adaptive geodesic steering preserves general capability. A steering method that reduces benchmark bias by collapsing general VLM capability is not useful as a deployment primitive, and several baselines suffer exactly such drops; BendVLM (sph.), for example, collapses on Pixtral and LLaVA-Vicuna. \modelname's MMStar changes remain small under both race and gender steering: the gate confines the intervention to directions that carry the targeted bias and leaves the remaining visual tokens largely untouched.

\begin{table}[!ht]
\centering
\setlength{\tabcolsep}{4.0pt}
\renewcommand{\arraystretch}{1.05}
\scriptsize
\begin{tabular}{@{}l c c c@{}}
\toprule[1.2pt]
\textbf{Model} & \makecell{\modelname\ vs.\\unsteered ($p$)} & \makecell{MMStar $\Delta$pp\\(race / gender)} & \makecell{McNemar $p$\\(race / gender)} \\
\midrule
Pixtral-12B         & $<\!10^{-4}$                                & $-0.40$ / $-0.33$ & $0.58$ / $0.49$ \\
LLaVA-Vicuna-7B     & $0.068^{\,a}$                               & $+0.47$ / $-0.20$ & $0.58$ / $0.77$ \\
LLaVA-Mistral-7B    & $0.007$                                     & $+0.40$ / $-0.07$ & $0.51$ / $1.00$ \\
Qwen3-VL-4B         & $0.002$                                     & $+0.13$ / $+0.60$ & $0.86$ / $0.09$ \\
\bottomrule[1.2pt]
\end{tabular}
\caption{\textbf{Significance summary at the paper's operating points.} Column 2: paired sign-flip permutation test of \modelname\ vs.\ the unsteered model, combined across non-zero-baseline bias tasks ($^{a}$on LLaVA-Vicuna the Nurse/Doctor task alone has $p<10^{-4}$). Columns 3--4: MMStar paired per-question statistics under race / gender steering. Protocol details in Appendix~\ref{app:significance}.}
\label{tab:significance}
\end{table}

\begin{table}[!ht]
\centering
\setlength{\tabcolsep}{3.0pt}
\renewcommand{\arraystretch}{1.05}
\scriptsize
\begin{tabular}{@{}l r r r r r r@{}}
\toprule[1.2pt]
\textbf{Method} & \makecell{Pix-\\tral} & \makecell{Qwen3-\\VL} & \makecell{Mis-\\tral} & \makecell{Vi-\\cuna} & mean & worst \\
\midrule
\modelname\ (paper)      & $-55.3$ & \best{$-59.9$} & \best{$-80.2$} & \best{$-90.1$} & \best{$-71.4$} & \best{$-55.3$} \\
LEACE + our gate         & \best{$-72.2$} & $-30.8$ & $-66.9$ & $-83.9$ & $-63.5$ & $-30.8$ \\
INLP (sph.)              & $-31.2$ & $-49.3$ & $-77.6$ & $-86.1$ & $-61.1$ & $-31.2$ \\
LEACE (sph.)             & $-70.6$ & $-26.4$ & $-51.5$ & $-45.4$ & $-48.5$ & $-26.4$ \\
\bottomrule[1.2pt]
\end{tabular}
\caption{\textbf{Added baselines.} Avg\,$\Delta\%$ at each method's best-avg-$\alpha$ (bias tasks with non-zero baseline; lower is better), with the four-model mean and worst case. INLP (sph.) is the strongest baseline from Table~\ref{tab:main_results}, shown for reference. Full details in Appendix~\ref{app:leace_results}.}
\label{tab:added_baselines}
\end{table}

\paragraph{Statistical significance}
Decoding is greedy and deterministic, so the relevant uncertainty is sampling over evaluation items. For every reported cell we compute 95\% bootstrap CIs (10{,}000 resamples, over items and identities), paired sign-flip permutation tests, and exact McNemar tests on MMStar (Appendix~\ref{app:significance}). Table~\ref{tab:significance} summarizes: \modelname's reductions are significant on three of four backbones (on LLaVA-Vicuna the Nurse/Doctor reduction alone has $p<10^{-4}$), and its MMStar changes are statistically indistinguishable from the unsteered model in all eight runs, whereas the strongest competitor, INLP (sph.), significantly damages Pixtral's MMStar ($-6.1$\,pp race, $p<10^{-4}$).

\paragraph{Held-out $\alpha$ selection}
Table~\ref{tab:main_results} selects $\alpha$ on the probe suite it reports, identically for every method, so no method is advantaged, but absolute reductions may be optimistic. Under a cross-fitted protocol selecting $\alpha$ on one identity fold and evaluating on the other (Appendix~\ref{app:heldout_alpha}), \modelname\ stays near the top on every model, matches the paper's $\alpha$ on six of eight folds, and reduces bias in every held-out fold.

\paragraph{Additional baseline families}
\label{sec:added_baselines}
Table~\ref{tab:added_baselines} adds LEACE \cite{belrose2023leace} at the same layer under the same protocol. Ported this way, LEACE is a serious baseline (it wins the Pixtral column while preserving MMStar there) but is less consistent across backbones ($-26\%$ to $-31\%$ on Qwen3-VL), while \modelname\ keeps the best four-model mean and worst case. Prompt-space mitigation (a fairness instruction appended to every prompt) is inconsistent and can even amplify bias (e.g., Vicuna MCQ $+85\%$); full results in Appendix~\ref{app:prompt_baseline}.

\paragraph{Debiasing, not blanket removal}
A steering rule could reduce measured bias by deleting demographic perception outright; \modelname\ does not. Steering one attribute leaves recognition of the \emph{other} attribute intact on all four backbones, all six MMStar dimensions stay within binomial noise, and MME (Qwen3-VL) moves only $-0.4\%$ / $-1.8\%$ with no subcategory collapsing. The trade-off with the \emph{targeted} attribute is controlled by $\alpha$: at $\alpha=0.5$ on Qwen3-VL, race recognition stays within $4$\,pp of baseline while $55\%$ of the MCQ reduction is already realized, so tasks that need the attribute can run at moderate $\alpha$ or gate steering off (Appendix~\ref{app:recognition}).

\subsection{Ablations and Sensitivity}
\label{sec:ablation}

\paragraph{Component and sensitivity}
Table~\ref{tab:component_ablation} isolates the contribution of each \modelname\ ingredient on a representative setting (Qwen3-VL-4B, MCQ salary, $\alpha\!=\!1.0$). Disabling the gate ($g_i\!\equiv\!1$) gives the same JSD as the ungated spherical hard-projection variant (5.61). Adding the gate \emph{without} Slerp improves this value from 5.61 to 4.53 ($-69\%$ from baseline), showing that selective gating alone already helps. The full method, combining gating \emph{and} Slerp interpolation, achieves 2.41 ($-84\%$), a further 14\,pp improvement in this setting.

Table~\ref{tab:sensitivity_ablation} sweeps gate sharpness $\kappa\!\in\!\{1,2,5,10\}$ and floor $g_{\mathrm{floor}}\!\in\!\{0.0,0.3,0.5\}$. All twelve settings reduce bias well below the unsteered baseline (14.65); the best reduction ($-84\%$) occurs at $\kappa\!=\!5,\;g_{\mathrm{floor}}\!=\!0.3$. High $\kappa$ with $g_{\mathrm{floor}}\!=\!0$ can over-select, but a positive floor recovers performance.

\paragraph{Geometry isolated at matched gate}
Toggling Euclidean versus spherical steering at matched gate condition on all four backbones, with paired sign-flip tests (Appendix~\ref{app:geometry_ablation}), shows the toggle is within sampling noise at the 4B--7B scales but decisive at 12B: on Pixtral the spherical form is significantly better under both toggles ($p=0.031$ / $0.015$), both Euclidean forms significantly damage MMStar there ($-6.3$\,pp, $p<10^{-8}$) while the spherical forms preserve it everywhere, and over-steering exposes failure modes the geodesic never showed ($9\times$ overshoot; a degenerate cell with $72/80$ unparseable). The ablations thus attribute \emph{robustness that grows with model scale} to on-sphere steering, and \emph{selectivity and controllability} to the gate.

\section{Conclusion}
\label{sec:conclusion}

We introduced \modelname, an inference-time debiasing framework for generative VLMs that discovers counterfactual spherical bias subspaces, applies a calibrated token gate, and steers with norm-preserving Slerp. Across four generative VLMs and three bias protocols, \modelname\ attains the lowest Avg\,$\Delta\%$ on all $4$ models against ten external steering baselines and prompt-based mitigation, significant on three of four backbones, while keeping MMStar indistinguishable from the unsteered model. Matched-gate ablations attribute robustness at scale to on-sphere steering and controllable selectivity to the gate, supporting adaptive geodesic steering as a practical primitive for inference-time bias mitigation.

\clearpage

\section*{Limitations}

\paragraph{Scope}
Our evaluation covers four generative VLMs, three demographic-bias protocols, and MMStar, but does not establish that \modelname\ generalizes to all architectures, languages, domains, or bias types. We focus on perceived race and gender in image-conditioned settings; other protected attributes, intersectional groups, multilingual prompts, and open-ended uses remain future work. The subspace construction may also extend to non-demographic bias axes such as political leaning, whose embedding-space structure has been studied in text models \cite{sun2025prism}.

\paragraph{Calibration}
\modelname\ still requires selecting a steering strength. We choose one best-avg-$\alpha$ per (method, model) from $\{0.25,0.5,0.75,1.0,1.5\}$ using held-out bias measurements, and the optimal value is model-dependent. A tuning-free rule, potentially using the bias-norm statistics $\tilde{b}$ and $\sigma_b$ to set token-level strengths, remains an open direction.

\paragraph{Deployment}
Lower benchmark bias should not be interpreted as complete fairness or safety. \modelname\ changes activations at inference time, but does not remove biased knowledge from model parameters or guarantee robustness under distribution shift. Misconfigured steering may also suppress useful demographic information or affect tasks beyond our capability checks: at the bias-minimizing steering strength, the targeted attribute's reportability on face-only close-ups is strongly reduced (Appendix~\ref{app:recognition}), so tasks that legitimately require the attribute, such as requested demographic descriptions or clinical documentation, should operate at a moderate $\alpha$ chosen from the trade-off curve, or gate steering off. Deployment should include broader auditing, human review where appropriate, and monitoring for residual or newly introduced harms.

\section*{Ethics Statement}

\paragraph{Intended use and dual use}
\modelname\ is designed to reduce demographic bias in deployed VLMs without retraining. The same mechanism, however, is dual-use: a steering subspace can be aimed at amplifying demographic sensitivity as easily as at reducing it, and inference-time manipulation of model behavior is demonstrably exploitable \cite{liu2026reasoning}. At high steering strength the intervention also shades from \emph{debiasing} into \emph{attribute removal} (Appendix~\ref{app:recognition}). We release the trade-off measurements needed to choose an operating point deliberately; deployments should measure both bias and attribute retention on the target task before fixing a steering strength.

\paragraph{Demographic labels}
The counterfactual datasets we use annotate \emph{perceived} race and binary gender as assigned by dataset curators. These labels are operational constructs for measuring output disparities; they do not capture self-identification, intersectional identity, or the full diversity of human appearance, and results should not be read as claims about any individual's identity.

\paragraph{Data and licensing}
All images come from publicly released research datasets (REFLECT/FOCUS, SocialCounterfactuals) used within their stated licenses; we generate no new images of people and release no new personal data. Artifact documentation and intended-use notes accompany the released benchmark suite (Appendix~\ref{sec:artifact_statement}).

\paragraph{Residual risk}
Reduced benchmark bias is not fairness. Steering does not remove biased knowledge from model parameters, may behave differently under distribution shift, and covers only the attributes and tasks we measure. Systems using \modelname\ in consequential settings should retain human oversight and independent auditing.

\bibliography{reference}

\begin{thebibliography}{55}
\providecommand{\natexlab}[1]{#1}

\bibitem[{Agrawal et~al.(2024)Agrawal, Antoniak, Hanna, Bout, Chaplot,
  Chudnovsky, Costa, Monicault, Garg, Gervet, Ghosh, H\'{e}liou, Jacob, Jiang,
  Khandelwal, Lacroix, Lample, Casas, Lavril, Scao, Lo, Marshall, Martin,
  Mensch, Muddireddy, Nemychnikova, Pellat, Platen, Raghuraman, Rozi\'{e}re,
  Sablayrolles, Saulnier, Sauvestre, Shang, Soletskyi, Stewart, Stock, Studnia,
  Subramanian, Vaze, Wang, and Yang}]{agrawal2024pixtral}
Pravesh Agrawal, Szymon Antoniak, Emma~Bou Hanna, Baptiste Bout, Devendra
  Chaplot, Jessica Chudnovsky, Diogo Costa, Baudouin~De Monicault, Saurabh
  Garg, Theophile Gervet, Soham Ghosh, Am\'{e}lie H\'{e}liou, Paul Jacob,
  Albert~Q. Jiang, Kartik Khandelwal, Timoth\'{e}e Lacroix, Guillaume Lample,
  Diego~Las Casas, Thibaut Lavril, and 23 others. 2024.
\newblock Pixtral 12b.
\newblock \emph{arXiv preprint arXiv:2410.07073}.

\bibitem[{An et~al.(2026)An, Jang, Hirota, Hachiuma, Augenstein, and
  Shim}]{an2026interpretable}
Na~Min An, Yoonna Jang, Yusuke Hirota, Ryo Hachiuma, Isabelle Augenstein, and
  Hyunjung Shim. 2026.
\newblock Interpretable debiasing of vision-language models for social
  fairness.
\newblock \emph{arXiv preprint arXiv:2602.24014}.

\bibitem[{Bai et~al.(2025)Bai, Cai, Chen, Chen, Chen, Cheng, Deng, Ding, Gao,
  Ge, Ge, Guo, Huang, Huang, Huang, Hui, Jiang, Li, Li, Li, Li, Lin, Lin, Liu,
  Liu, Liu, Liu, Liu, Liu, Lu, Luo, Lv, Men, Meng, Ren, Ren, Song, Sun, Tang,
  Tu, Wan, Wang, Wang, Wang, Wang, Xie, Xu, Xu, Xu, Yang, Yang, Yang, Yang, Yu,
  Zhang, Zhang, Zhang, Zheng, Zhong, Zhou, Zhou, Zhou, Zhu, and
  Zhu}]{bai2025qwen3}
Shuai Bai, Yuxuan Cai, Ruizhe Chen, Keqin Chen, Xionghui Chen, Zesen Cheng,
  Lianghao Deng, Wei Ding, Chang Gao, Chunjiang Ge, Wenbin Ge, Zhifang Guo,
  Qidong Huang, Jie Huang, Fei Huang, Binyuan Hui, Shutong Jiang, Zhaohai Li,
  Mingsheng Li, and 45 others. 2025.
\newblock Qwen3-vl technical report.
\newblock \emph{arXiv preprint arXiv:2511.21631}.

\bibitem[{Banerjee et~al.(2024)Banerjee, Teso, Sayin, and
  Passerini}]{banerjee2024learning}
Debodeep Banerjee, Stefano Teso, Burcu Sayin, and Andrea Passerini. 2024.
\newblock Learning to guide human decision makers with vision-language models.
\newblock \emph{arXiv preprint arXiv:2403.16501}.

\bibitem[{Belrose et~al.(2023)Belrose, Schneider-Joseph, Ravfogel, Cotterell,
  Raff, and Biderman}]{belrose2023leace}
Nora Belrose, David Schneider-Joseph, Shauli Ravfogel, Ryan Cotterell, Edward
  Raff, and Stella Biderman. 2023.
\newblock {LEACE}: Perfect linear concept erasure in closed form.
\newblock \emph{Advances in Neural Information Processing Systems},
  36:66044--66063.

\bibitem[{Chen et~al.(2025)Chen, Li, Fan, Chen, Zhou, Wang, and
  Li}]{chen2025growth}
Die Chen, Zhiwen Li, Mingyuan Fan, Cen Chen, Wenmeng Zhou, Yanhao Wang, and
  Yaliang Li. 2025.
\newblock Growth inhibitors for suppressing inappropriate image concepts in
  diffusion models.
\newblock In \emph{International Conference on Learning Representations},
  volume 2025, pages 79164--79184.

\bibitem[{Chen et~al.(2026)Chen, Huang, Zhao, Jiang, Chang, and
  Yu}]{chen2026reflect}
Haodong Chen, Qiang Huang, Jiaqi Zhao, Qiuping Jiang, Xiaojun Chang, and Jun
  Yu. 2026.
\newblock Measuring social bias in vision-language models with face-only
  counterfactuals from real photos.
\newblock \emph{arXiv preprint arXiv:2601.06931}.

\bibitem[{Chen et~al.(2024)Chen, Li, Dong, Zhang, Zang, Chen, Duan, Wang, Qiao,
  Lin, and Zhao}]{chen2024we}
Lin Chen, Jinsong Li, Xiaoyi Dong, Pan Zhang, Yuhang Zang, Zehui Chen, Haodong
  Duan, Jiaqi Wang, Yu~Qiao, Dahua Lin, and Feng Zhao. 2024.
\newblock Are we on the right way for evaluating large vision-language models?
\newblock \emph{arXiv preprint arXiv:2403.20330}.

\bibitem[{Cheng et~al.(2025)Cheng, Guo, Guo, Yang, Gan, Guan, and
  Nie}]{cheng2025social}
Harry Cheng, Yangyang Guo, Qingpei Guo, Ming Yang, Tian Gan, Weili Guan, and
  Liqiang Nie. 2025.
\newblock Social debiasing for fair multi-modal llms.
\newblock In \emph{Proceedings of the IEEE/CVF International Conference on
  Computer Vision}, pages 1740--1750.

\bibitem[{Chuang et~al.(2023)Chuang, Jampani, Li, Torralba, and
  Jegelka}]{chuang2023biasedprompts}
Ching-Yao Chuang, Varun Jampani, Yuanzhen Li, Antonio Torralba, and Stefanie
  Jegelka. 2023.
\newblock Debiasing vision-language models via biased prompts.
\newblock \emph{arXiv preprint arXiv:2302.00070}.

\bibitem[{Duan et~al.(2024)Duan, Yang, Qiao, Fang, Chen, Liu, Dong, Zang,
  Zhang, Wang, Lin, and Chen}]{duan2024vlmevalkit}
Haodong Duan, Junming Yang, Yuxuan Qiao, Xinyu Fang, Lin Chen, Yuan Liu, Xiaoyi
  Dong, Yuhang Zang, Pan Zhang, Jiaqi Wang, Dahua Lin, and Kai Chen. 2024.
\newblock \href {https://doi.org/10.1145/3664647.3685520} {Vlmevalkit: An
  open-source toolkit for evaluating large multi-modality models}.
\newblock In \emph{Proceedings of the 32nd ACM International Conference on
  Multimedia}, MM '24, page 11198–11201.

\bibitem[{Feng et~al.(2025)Feng, Sun, Sun, Zhu, Huang, Tung, and
  Chen}]{feng2025don}
Yingchaojie Feng, Yiqun Sun, Yandong Sun, Minfeng Zhu, Qiang Huang, Anthony
  Kum~Hoe Tung, and Wei Chen. 2025.
\newblock Don't reinvent the wheel: Efficient instruction-following text
  embedding based on guided space transformation.
\newblock In \emph{Proceedings of the 63rd Annual Meeting of the Association
  for Computational Linguistics (Volume 1: Long Papers)}, pages 24511--24525.

\bibitem[{Gerych et~al.(2026)Gerych, Parent, Perian, Javed, Solomon, and
  Ghassemi}]{wring2026}
Walter Gerych, Cassandra Parent, Quinn Perian, Rafiya Javed, Justin Solomon,
  and Marzyeh Ghassemi. 2026.
\newblock Wring out the bias: A rotation-based alternative to projection
  debiasing.
\newblock In \emph{The Fourteenth International Conference on Learning
  Representations}.

\bibitem[{Gerych et~al.(2024)Gerych, Zhang, Hamidieh, Pan, Sharma, Hartvigsen,
  and Ghassemi}]{gerych2024bendvlm}
Walter Gerych, Haoran Zhang, Kimia Hamidieh, Eileen Pan, Maanas~K Sharma, Tom
  Hartvigsen, and Marzyeh Ghassemi. 2024.
\newblock Bendvlm: Test-time debiasing of vision-language embeddings.
\newblock \emph{Advances in Neural Information Processing Systems},
  37:62480--62502.

\bibitem[{Ghate et~al.(2025)Ghate, Charlesworth, Diab, and
  Caliskan}]{ghate2025propagate}
Kshitish Ghate, Tessa Charlesworth, Mona Diab, and Aylin Caliskan. 2025.
\newblock Biases propagate in encoder-based vision-language models: A
  systematic analysis from intrinsic measures to zero-shot retrieval outcomes.
\newblock In \emph{Findings of the Association for Computational Linguistics:
  ACL 2025}, pages 18562--18580.

\bibitem[{Hall et~al.(2023)Hall, Gustafson, Adcock, Misra, and
  Ross}]{hall2023zeroshot}
Melissa Hall, Laura Gustafson, Aaron Adcock, Ishan Misra, and Candace Ross.
  2023.
\newblock Vision-language models performing zero-shot tasks exhibit disparities
  between gender groups.
\newblock In \emph{Proceedings of the IEEE/CVF International Conference on
  Computer Vision (ICCV) Workshops}, pages 2778--2785.

\bibitem[{Hirota et~al.(2022)Hirota, Nakashima, and
  Garcia}]{hirota2022captionbias}
Yusuke Hirota, Yuta Nakashima, and Noa Garcia. 2022.
\newblock Quantifying societal bias amplification in image captioning.
\newblock In \emph{Proceedings of the IEEE/CVF Conference on Computer Vision
  and Pattern Recognition}, pages 13440--13449.

\bibitem[{Howard et~al.(2024)Howard, Madasu, Le, Moreno, Bhiwandiwalla, and
  Lal}]{howard2024socialcounterfactuals}
Phillip Howard, Avinash Madasu, Tiep Le, Gustavo~Lujan Moreno, Anahita
  Bhiwandiwalla, and Vasudev Lal. 2024.
\newblock Socialcounterfactuals: Probing and mitigating intersectional social
  biases in vision-language models with counterfactual examples.
\newblock In \emph{Proceedings of the IEEE/CVF Conference on Computer Vision
  and Pattern Recognition}, pages 11975--11985.

\bibitem[{Huang et~al.(2025)Huang, Qin, Zhang, Yuan, Wang, and
  Zhao}]{huang2025visbias}
Jen-tse Huang, Jiantong Qin, Jianping Zhang, Youliang Yuan, Wenxuan Wang, and
  Jieyu Zhao. 2025.
\newblock Visbias: Measuring explicit and implicit social biases in vision
  language models.
\newblock In \emph{Proceedings of the 2025 Conference on Empirical Methods in
  Natural Language Processing}, pages 17981--18004.

\bibitem[{Huben et~al.(2024)Huben, Cunningham, Smith, Ewart, and
  Sharkey}]{huben2024sae}
Robert Huben, Hoagy Cunningham, Logan~Riggs Smith, Aidan Ewart, and Lee
  Sharkey. 2024.
\newblock Sparse autoencoders find highly interpretable features in language
  models.
\newblock \emph{Proceedings of the International Conference on Learning
  Representations (ICLR)}.

\bibitem[{Jang et~al.(2025)Jang, Jung, and Wang}]{jang2025targetbias}
Taeuk Jang, Hoin Jung, and Xiaoqian Wang. 2025.
\newblock Target bias is all you need: Zero-shot debiasing of vision-language
  models with bias corpus.
\newblock In \emph{Proceedings of the IEEE/CVF International Conference on
  Computer Vision}, pages 1935--1946.

\bibitem[{Janghorbani and De~Melo(2023)}]{janghorbani2023mmbias}
Sepehr Janghorbani and Gerard De~Melo. 2023.
\newblock Multi-modal bias: Introducing a framework for stereotypical bias
  assessment beyond gender and race in vision--language models.
\newblock In \emph{Proceedings of the 17th Conference of the European Chapter
  of the Association for Computational Linguistics}, pages 1725--1735.

\bibitem[{Jung et~al.(2024)Jung, Jang, and Wang}]{jung2024sfid}
Hoin Jung, Taeuk Jang, and Xiaoqian Wang. 2024.
\newblock A unified debiasing approach for vision-language models across
  modalities and tasks.
\newblock \emph{Advances in Neural Information Processing Systems},
  37:21034--21058.

\bibitem[{Li et~al.(2023)Li, Patel, Vi{\'e}gas, Pfister, and
  Wattenberg}]{li2023inferencetime}
Kenneth Li, Oam Patel, Fernanda Vi{\'e}gas, Hanspeter Pfister, and Martin
  Wattenberg. 2023.
\newblock Inference-time intervention: Eliciting truthful answers from a
  language model.
\newblock \emph{Advances in Neural Information Processing Systems},
  36:41451--41530.

\bibitem[{Li et~al.(2025)Li, Chen, Fan, Chen, Li, Wang, and
  Zhou}]{li2025responsible}
Zhiwen Li, Die Chen, Mingyuan Fan, Cen Chen, Yaliang Li, Yanhao Wang, and
  Wenmeng Zhou. 2025.
\newblock Responsible diffusion models via constraining text embeddings within
  safe regions.
\newblock In \emph{Proceedings of the ACM on Web Conference 2025}, pages
  1588--1601.

\bibitem[{Liu et~al.(2024)Liu, Li, Li, Li, Zhang, Shen, and
  Lee}]{liu2024llavanext}
Haotian Liu, Chunyuan Li, Yuheng Li, Bo~Li, Yuanhan Zhang, Sheng Shen, and
  Yong~Jae Lee. 2024.
\newblock \href {https://llava-vl.github.io/blog/2024-01-30-llava-next/}
  {Llava-next: Improved reasoning, ocr, and world knowledge}.

\bibitem[{Liu et~al.(2023)Liu, Li, Wu, and Lee}]{liu2023visual}
Haotian Liu, Chunyuan Li, Qingyang Wu, and Yong~Jae Lee. 2023.
\newblock Visual instruction tuning.
\newblock \emph{Advances in neural information processing systems},
  36:34892--34916.

\bibitem[{Liu et~al.(2026)Liu, Tang, and Tung}]{liu2026reasoning}
Yuansen Liu, Yixuan Tang, and Anthony Kum~Hoe Tung. 2026.
\newblock Reasoning hijacking: The fragility of reasoning alignment in large
  language models.
\newblock In \emph{Proceedings of the 64th Annual Meeting of the Association
  for Computational Linguistics (Volume 1: Long Papers)}, pages 36646--36665.

\bibitem[{Manzini et~al.(2019)Manzini, Chong, Black, and
  Tsvetkov}]{manzini2019black}
Thomas Manzini, Lim~Yao Chong, Alan~W Black, and Yulia Tsvetkov. 2019.
\newblock Black is to criminal as caucasian is to police: Detecting and
  removing multiclass bias in word embeddings.
\newblock In \emph{Proceedings of the 2019 Conference of the North American
  Chapter of the Association for Computational Linguistics: Human Language
  Technologies, Volume 1 (Long and Short Papers)}, pages 615--621.

\bibitem[{Mardia and Jupp(2000)}]{mardia2000directional}
Kanti~V. Mardia and Peter~E. Jupp. 2000.
\newblock Directional statistics.
\newblock In \emph{Wiley Series in Probability and Statistics}. Wiley.

\bibitem[{Molahasani et~al.(2025)Molahasani, Motamedi, Greenspan, Kim, and
  Etemad}]{molahasani2025prism}
Mahdiyar Molahasani, Azadeh Motamedi, Michael Greenspan, Il-Min Kim, and Ali
  Etemad. 2025.
\newblock Prism: Reducing spurious implicit biases in vision-language models
  with llm-guided embedding projection.
\newblock In \emph{Proceedings of the IEEE/CVF International Conference on
  Computer Vision}, pages 688--697.

\bibitem[{Narnaware et~al.(2025)Narnaware, Vayani, Gupta, Swetha, and
  Shah}]{narnaware2026bbqvbenchmarkingvisualstereotype}
Vishal Narnaware, Ashmal Vayani, Rohit Gupta, Sirnam Swetha, and Mubarak Shah.
  2025.
\newblock \href {https://arxiv.org/abs/2502.08779} {Bbq-v: Benchmarking visual
  stereotype bias in large multimodal models}.
\newblock \emph{Preprint}, arXiv:2502.08779.

\bibitem[{Pang et~al.(2025)Pang, Qiao, Walker, Cunningham, and
  Koh}]{pang2025cabin}
Bo~Pang, Tingrui Qiao, Caroline Walker, Chris Cunningham, and Yun~Sing Koh.
  2025.
\newblock Cabin: Debiasing vision-language models using backdoor adjustments.
\newblock In \emph{Proceedings of the Thirty-Fourth International Joint
  Conference on Artificial Intelligence, IJCAI-25}, pages 484--492.

\bibitem[{Park et~al.(2023)Park, Choe, and Veitch}]{park2023linear}
Kiho Park, Yo~Joong Choe, and Victor Veitch. 2023.
\newblock The linear representation hypothesis and the geometry of large
  language models.
\newblock \emph{arXiv preprint arXiv:2311.03658}.

\bibitem[{Pennec(2006)}]{pennec2006riemannian}
Xavier Pennec. 2006.
\newblock Intrinsic statistics on {Riemannian} manifolds: Basic tools for
  geometric measurements.
\newblock \emph{Journal of Mathematical Imaging and Vision}, 25(1):127--154.

\bibitem[{Radford et~al.(2021)Radford, Kim, Hallacy, Ramesh, Goh, Agarwal,
  Sastry, Askell, Mishkin, Clark, Krueger, and Sutskever}]{radford2021learning}
Alec Radford, Jong~Wook Kim, Chris Hallacy, Aditya Ramesh, Gabriel Goh,
  Sandhini Agarwal, Girish Sastry, Amanda Askell, Pamela Mishkin, Jack Clark,
  Gretchen Krueger, and Ilya Sutskever. 2021.
\newblock \href {https://proceedings.mlr.press/v139/radford21a.html} {Learning
  transferable visual models from natural language supervision}.
\newblock In \emph{Proceedings of the 38th International Conference on Machine
  Learning}, volume 139 of \emph{Proceedings of Machine Learning Research},
  pages 8748--8763. PMLR.

\bibitem[{Ratzlaff et~al.(2024)Ratzlaff, Olson, Hinck, Tseng, Lal, and
  Howard}]{ratzlaff2024ablating}
Neale Ratzlaff, Matthew~Lyle Olson, Musashi Hinck, Shao-Yen Tseng, Vasudev Lal,
  and Phillip Howard. 2024.
\newblock Debiasing large vision-language models by ablating protected
  attribute representations.
\newblock \emph{arXiv preprint arXiv:2410.13976}.

\bibitem[{Ravfogel et~al.(2020)Ravfogel, Elazar, Gonen, Twiton, and
  Goldberg}]{ravfogel2020null}
Shauli Ravfogel, Yanai Elazar, Hila Gonen, Michael Twiton, and Yoav Goldberg.
  2020.
\newblock Null it out: Guarding protected attributes by iterative nullspace
  projection.
\newblock In \emph{Proceedings of the 58th Annual Meeting of the Association
  for Computational Linguistics}, pages 7237--7256.

\bibitem[{Ravfogel et~al.(2022)Ravfogel, Twiton, Goldberg, and
  Cotterell}]{ravfogel2022rlace}
Shauli Ravfogel, Michael Twiton, Yoav Goldberg, and Ryan Cotterell. 2022.
\newblock Linear adversarial concept erasure.
\newblock In \emph{International Conference on Machine Learning}, pages
  18400--18421.

\bibitem[{Ross et~al.(2021)Ross, Katz, and Barbu}]{ross2021groundedbias}
Candace Ross, Boris Katz, and Andrei Barbu. 2021.
\newblock Measuring social biases in grounded vision and language embeddings.
\newblock In \emph{Proceedings of the 2021 Conference of the North American
  Chapter of the Association for Computational Linguistics: Human Language
  Technologies}, pages 998--1008.

\bibitem[{Ruggeri and Nozza(2023)}]{ruggeri2023multidim}
Gabriele Ruggeri and Debora Nozza. 2023.
\newblock A multi-dimensional study on bias in vision-language models.
\newblock In \emph{Findings of the Association for Computational Linguistics:
  ACL 2023}, pages 6445--6455.

\bibitem[{Seth et~al.(2023)Seth, Hemani, and Agarwal}]{seth2023dear}
Ashish Seth, Mayur Hemani, and Chirag Agarwal. 2023.
\newblock Dear: Debiasing vision-language models with additive residuals.
\newblock In \emph{Proceedings of the IEEE/CVF Conference on Computer Vision
  and Pattern Recognition}, pages 6820--6829.

\bibitem[{Sheng et~al.(2026)Sheng, Shen, Zhao, Fang, Liu, Liang, Wang, Zhang,
  and Chua}]{sheng2026alphasteer}
Leheng Sheng, Changshuo Shen, Weixiang Zhao, Junfeng Fang, Xiaohao Liu, Zhenkai
  Liang, Xiang Wang, An~Zhang, and Tat-Seng Chua. 2026.
\newblock \href {https://openreview.net/forum?id=1vvbzAqdTe} {{AlphaSteer}:
  Learning refusal steering with principled null-space constraint}.
\newblock In \emph{The Fourteenth International Conference on Learning
  Representations}.

\bibitem[{Shoemake(1985)}]{shoemake1985slerp}
Ken Shoemake. 1985.
\newblock Animating rotation with quaternion curves.
\newblock In \emph{Proceedings of the 12th Annual Conference on Computer
  Graphics and Interactive Techniques ({SIGGRAPH})}, pages 245--254.

\bibitem[{Singh et~al.(2026)Singh, Xu, Subramanyam, and
  Kankanhalli}]{singh2026prompts}
Himanshu Singh, Ziwei Xu, AV~Subramanyam, and Mohan Kankanhalli. 2026.
\newblock Do prompts guarantee safety? mitigating toxicity from llm generations
  through subspace intervention.
\newblock \emph{arXiv preprint arXiv:2602.06623}.

\bibitem[{Srinivasan and Bisk(2022)}]{srinivasan2022dangerous}
Tejas Srinivasan and Yonatan Bisk. 2022.
\newblock Worst of both worlds: Biases compound in pre-trained
  vision-and-language models.
\newblock In \emph{Proceedings of the 4th Workshop on Gender Bias in Natural
  Language Processing (GeBNLP)}, pages 77--85.

\bibitem[{Sun et~al.(2025{\natexlab{a}})Sun, Huang, Xu, Sun, Tang, and
  Tung}]{sun2025one}
Yandong Sun, Qiang Huang, Ziwei Xu, Yiqun Sun, Yixuan Tang, and Anthony~KH
  Tung. 2025{\natexlab{a}}.
\newblock One swallow does not make a summer: Understanding semantic structures
  in embedding spaces.
\newblock \emph{arXiv preprint arXiv:2512.00852}.

\bibitem[{Sun et~al.(2025{\natexlab{b}})Sun, Huang, Tung, and
  Yu}]{sun2025prism}
Yiqun Sun, Qiang Huang, Anthony Kum~Hoe Tung, and Jun Yu. 2025{\natexlab{b}}.
\newblock Prism: A framework for producing interpretable political bias
  embeddings with political-aware cross-encoder.
\newblock In \emph{Proceedings of the 63rd Annual Meeting of the Association
  for Computational Linguistics (Volume 1: Long Papers)}, pages 27719--27733.

\bibitem[{Turner et~al.(2023)Turner, Thiergart, Udell, Leech, Mini, and
  MacDiarmid}]{turner2023activation}
Alex Turner, Lisa Thiergart, David Udell, Gavin Leech, Ulisse Mini, and Monte
  MacDiarmid. 2023.
\newblock Activation addition: Steering language models without optimization.
\newblock \emph{arXiv preprint arXiv:2308.10248}.

\bibitem[{Wang et~al.(2026)Wang, Sun, Wei, Hsieh, and
  Kawahara}]{wang2026sparse}
Hao Wang, Yiqun Sun, Pengfei Wei, Lawrence~B Hsieh, and Daisuke Kawahara. 2026.
\newblock Sparse autoencoders as plug-and-play firewalls for adversarial attack
  detection in vlms.
\newblock \emph{arXiv preprint arXiv:2605.07447}.

\bibitem[{Wang et~al.(2024)Wang, Cao, Zhang, Yuan, Shan, Chen, and
  Gao}]{wang2024vlbiasbench}
Sibo Wang, Xiangkui Cao, Jie Zhang, Zheng Yuan, Shiguang Shan, Xilin Chen, and
  Wen Gao. 2024.
\newblock Vlbiasbench: A comprehensive benchmark for evaluating bias in large
  vision-language model.
\newblock \emph{arXiv preprint arXiv:2406.14194}.

\bibitem[{Wu et~al.(2024)Wu, Liu, Wang, Yao, Deng, Lv, and
  Song}]{wu2024candidate}
Xing Wu, Kehong Liu, Jianjia Wang, Junfeng Yao, Bin Deng, Rongqi Lv, and Jun
  Song. 2024.
\newblock Candidate evaluation with multimodal data-driven for recruitment.
\newblock In \emph{International Conference on Pattern Recognition}, pages
  81--96. Springer.

\bibitem[{Zhang et~al.(2025)Zhang, Guo, and Kankanhalli}]{zhang2025jointvl}
Haoyu Zhang, Yangyang Guo, and Mohan Kankanhalli. 2025.
\newblock Joint vision-language social bias removal for clip.
\newblock In \emph{Proceedings of the Computer Vision and Pattern Recognition
  Conference}, pages 4246--4255.

\bibitem[{Zhou et~al.(2022)Zhou, Lai, and Jiang}]{zhou2022vlstereoset}
Kankan Zhou, Eason Lai, and Jing Jiang. 2022.
\newblock Vlstereoset: A study of stereotypical bias in pre-trained
  vision-language models.
\newblock In \emph{Proceedings of the 2nd Conference of the Asia-Pacific
  Chapter of the Association for Computational Linguistics and the 12th
  International Joint Conference on Natural Language Processing (Volume 1: Long
  Papers)}, pages 527--538.

\bibitem[{Zou et~al.(2023)Zou, Phan, Chen, Campbell, Guo, Ren, Pan, Yin,
  Mazeika, Dombrowski, Goel, Li, Byun, Wang, Mallen, Basart, Koyejo, Song,
  Fredrikson, Kolter, and Hendrycks}]{zou2023representation}
Andy Zou, Long Phan, Sarah Chen, James Campbell, Phillip Guo, Richard Ren,
  Alexander Pan, Xuwang Yin, Mantas Mazeika, Ann-Kathrin Dombrowski, Shashwat
  Goel, Nathaniel Li, Michael~J. Byun, Zifan Wang, Alex Mallen, Steven Basart,
  Sanmi Koyejo, Dawn Song, Matt Fredrikson, and 2 others. 2023.
\newblock Representation engineering: A top-down approach to {AI} transparency.
\newblock \emph{arXiv preprint arXiv:2310.01405}.

\end{thebibliography}

\appendix
\newpage
\section{Implementation Details}
\label{app:implementation}
\providecommand{\Log}{\operatorname{Log}}
\providecommand{\Exp}{\operatorname{Exp}}

\subsection{Notation}
\label{app:naming}
\label{app:notation}

The main text uses the compact notation $\mathbf{x}_{u,a}$, where $a$ is the
protected attribute value and $u$ collects the visual factors held fixed. In
the race-debiasing implementation, $a=r\in\mathcal{R}$ and
$u=(o,b,g)$, where $o\in\mathcal{O}$ is the occupation,
$b\in\mathcal{B}_o$ is the base identity, and $g\in\mathcal{G}$ is the
gender label. Thus
\[
\mathbf{x}_{u,a}=I_{o,b}^{(r,g)}.
\]
The corresponding activation matrix is
\[
\mathbf{H}_{u,a}
=
\mathbf{H}_{o,b}^{(r,g)}
\in\mathbb{R}^{T\times D},
\]
where $D$ is the hidden dimension, $T$ is the number of discovery visual
tokens, and the $t$-th token activation is
$\mathbf{h}_{o,b,t}^{(r,g)}\in\mathbb{R}^{D}$. The normalized pooled
representative in the main text is
\[
\mathbf{y}_{u,a}
=
\nu\left(\frac{1}{T}\sum_{t=1}^{T}\mathbf{H}_{u,a,t}\right)
=
\widehat{\bar{\mathbf{h}}}_{o,b}^{(r,g)}.
\]
The within-context Fr\'echet mean and shift are similarly related by
\[
\begin{aligned}
\mathbf{c}_u&=\boldsymbol{c}_{o,b,g},\\
\mathbf{s}_{u,a}&=\mathbf{s}_{o,b}^{(r,g)}
\quad \text{when }u=(o,b,g),\ a=r.
\end{aligned}
\]
For gender debiasing, the roles of $r$ and $g$ are exchanged: $a=g$ and
$u=(o,b,r)$. The learned basis is denoted by $\mathbf{V}_{\mathrm{bias}}$
throughout the paper. At inference time, the hooked activation tensor has shape
\[
\mathbf{Z}\in\mathbb{R}^{n_{\mathrm{batch}}\times T'\times D},
\]
where $T'$ is the sequence length at the intervention layer. We use
$n_{\mathrm{batch}}$, rather than $B$, to denote batch size. The steering
strength is denoted by $\alpha\in\mathbb{R}$.

\subsection{Implementation Details of \modelname}
\label{app:ggss_impl}

\subsubsection{Discovery data and activation extraction}
\label{app:discovery-extraction}

Discovery occupations are held out on a per-task basis so that the occupation used to score the downstream bias metric never appears in the discovery pool. For the race-based tasks (MCQ, 2AFC) we discover on $\{\texttt{cook}, \texttt{doctor}, \texttt{lawyer}, \texttt{nurse}, \texttt{teacher}\}$. For the gender-based Nurse/Doctor task we discover on $\{\texttt{cook}, \texttt{lawyer}, \texttt{teacher}\}$ and hold out both \texttt{nurse} and \texttt{doctor}. For each base identity in the active discovery pool, we use all $K=5$ race variants and both gender variants, yielding $10$ counterfactual images per identity. All images within the same counterfactual group are required to have identical pixel dimensions. This is enforced by a runtime check before activation extraction to guarantee a consistent number of visual tokens.

For each image, we run a single forward pass with the fixed discovery prompt
\begin{equation*}
\texttt{``Describe this image in detail.''}
\end{equation*}
and capture the output of the target module using \texttt{register\_forward\_hook}. If the hooked module returns a tuple, we cache the first element. Otherwise, we cache the output tensor directly. The captured tensor is reshaped into
\begin{equation*}
\mathbf{H}_{o,b}^{(r,g)} \in \mathbb{R}^{T \times D}.
\end{equation*}

When $T$ exceeds a configurable cap \texttt{max\_tokens} (default: $2048$), we draw a deterministic random subset of token indices using seed $42$. The same token indices are reused for all $(r,g)$ variants of the same base identity $b$, so token positions remain aligned across counterfactual images. \modelname\ then mean-pools the retained tokens to obtain one vector per image.

\subsubsection{Numerical spherical primitives}
\label{app:spherical-primitives}

Section~3 defines the ideal spherical operations $\nu$, $\log$, $\exp$,
and $\operatorname{Slerp}$. In Appendix A, we use
$\operatorname{Normalize}_{\varepsilon}$, $\Log$, and $\Exp$ to denote
their numerical implementations. Specifically,
\[
\operatorname{Normalize}_{\varepsilon}(\mathbf{x})
=
\frac{\mathbf{x}}{\|\mathbf{x}\|_2+\varepsilon},
\qquad
\varepsilon=10^{-10}.
\]
Inner products passed to $\arccos$ are clamped to
$[-1+10^{-7},1-10^{-7}]$. Degenerate cases with geodesic distance below
$\varepsilon_{\mathrm{geo}}=10^{-8}$ return the zero tangent vector for the
logarithmic map, and tangent vectors with norm below
$\varepsilon_{\mathrm{geo}}$ return the base point under the exponential map.
For Slerp, when the angle is below $\varepsilon_{\mathrm{geo}}$, we use the
continuous small-angle limit. All spherical operations are implemented in
vectorized form over tokens.

\subsubsection{Pooled counterfactual bias subspace discovery}
\label{app:pooled-discovery}

For each image, we first mean-pool the token activations and project onto the unit sphere:
\begin{equation}
\widehat{\bar{\mathbf{h}}}_{o,b}^{(r,g)} = \operatorname{Normalize}_{\varepsilon}\!\left(\frac{1}{T}\sum_{t=1}^{T}\mathbf{h}_{o,b,t}^{(r,g)}\right).
\end{equation}
For race debiasing, $u=(o,b,g)$ and $a=r$. Thus the main-body center $\mathbf{c}_u$ is written concretely as $\boldsymbol{c}_{o,b,g}$. For each counterfactual group $(o,b,g)$, this center is computed by iterative Fr\'echet (Karcher) mean. Starting from the normalized Euclidean mean $\boldsymbol{\eta}^{(0)}$, the update at iteration $m$ is
\begin{equation}
\begin{aligned}
\bar{\mathbf{t}}^{(m)}
&= \frac{1}{K}\sum_{r}\Log_{\boldsymbol{\eta}^{(m)}}\!\left(\widehat{\bar{\mathbf{h}}}_{o,b}^{(r,g)}\right),\\
\boldsymbol{\eta}^{(m+1)}
&= \operatorname{Normalize}_{\varepsilon}\!\left(
  \Exp_{\boldsymbol{\eta}^{(m)}}\!\bigl(\bar{\mathbf{t}}^{(m)}\bigr)
  \right).
\end{aligned}
\label{eq:appendix_karcher}
\end{equation}
The iteration terminates when $\|\bar{\mathbf{t}}^{(m)}\|_2 < 10^{-7}$ or after $100$ iterations. We denote the converged center by $\boldsymbol{c}_{o,b,g}$.

The main-body shift $\mathbf{s}_{u,a}$ is written concretely as $\mathbf{s}_{o,b}^{(r,g)}$:
\begin{equation}
\mathbf{s}_{u,a}
=
\mathbf{s}_{o,b}^{(r,g)}
=
\log_{\mathbf{c}_u}(\mathbf{y}_{u,a})
=
\Log_{\boldsymbol{c}_{o,b,g}}\!\left(\widehat{\bar{\mathbf{h}}}_{o,b}^{(r,g)}\right).
\label{eq:appendix_pooled_shift}
\end{equation}
Concatenating across occupations, identities, genders, and races yields a shift matrix
\begin{equation}
\mathbf{S} \in \mathbb{R}^{N_{\mathrm{shifts}}\times D},\qquad
N_{\mathrm{shifts}}=\sum_{o\in\mathcal{O}}|\mathcal{B}_o|\,|\mathcal{G}|\,|\mathcal{R}|.
\end{equation}
The SVD $\mathbf{S} = \mathbf{U}\boldsymbol{\Sigma}\mathbf{V}^{\top}$ yields the bias subspace
\begin{equation}
\mathbf{V}_{\mathrm{bias}} = [\boldsymbol{v}_1 \mid \cdots \mid \boldsymbol{v}_k]\in\mathbb{R}^{D\times k},
\end{equation}
with $k=K-1=4$ by default. The global reference point is the Fr\'echet mean of all pooled unit vectors, computed with the same iteration as Eq.~\eqref{eq:appendix_karcher}.

\subsubsection{Gate calibration statistics}
\label{app:gate-calibration}

For each normalized pooled discovery representative $\mathbf{c}=\mathbf{y}_{u,a}$, we compute the protected-coordinate magnitude
\[
b(\mathbf{c})
=
\|\mathbf{V}_{\mathrm{bias}}^\top\log_{\boldsymbol{\mu}}(\mathbf{c})\|_2.
\]
Equivalently, in the concrete implementation notation,
$\mathbf{c}=\widehat{\bar{\mathbf{h}}}_{o,b}^{(r,g)}$. We cache
\[
\tilde{b}=\operatorname{median}\{b(\mathbf{c})\},
\qquad
\sigma_b=\operatorname{std}\{b(\mathbf{c})\}.
\]
These scalars are loaded at inference time to compute the calibrated token gate.

\subsubsection{Inference-time geodesic-gated steering}
\label{app:inference-steering}

At inference time, a forward hook intercepts the output activation tensor $\mathbf{Z}\in\mathbb{R}^{n_{\mathrm{batch}}\times T'\times D}$, flattens it into $N=n_{\mathrm{batch}}T'$ token vectors, and processes each nonzero token independently. For token $\mathbf{h}_i$, we record its norm $r_i=\|\mathbf{h}_i\|_2$ and normalized direction $\widehat{\mathbf{h}}_i=\operatorname{Normalize}_{\varepsilon}(\mathbf{h}_i)$. The implemented token update is
\begin{equation}
\begin{aligned}
\mathbf{t}_i
&= \Log_{\boldsymbol{\mu}}(\widehat{\mathbf{h}}_i),\\
\mathbf{p}_i
&= \mathbf{V}_{\mathrm{bias}}\mathbf{V}_{\mathrm{bias}}^{\top}\mathbf{t}_i,\\
\mathbf{t}_i^{\mathrm{clean}}
&= (I-\mathbf{V}_{\mathrm{bias}}\mathbf{V}_{\mathrm{bias}}^{\top})\mathbf{t}_i.
\end{aligned}
\end{equation}
The target direction is
\begin{equation}
\widehat{\mathbf{h}}_i^{\mathrm{target}}
=
\operatorname{Normalize}_{\varepsilon}\!\left(\Exp_{\boldsymbol{\mu}}(\mathbf{t}_i^{\mathrm{clean}})\right).
\end{equation}
The per-token gate uses the discovery statistics:
\begin{equation}
\begin{aligned}
z_i
&= \frac{\|\mathbf{V}_{\mathrm{bias}}^{\top}\mathbf{t}_i\|_2 - \tilde{b}}{\sigma_b + \varepsilon},\\
g_i
&= g_{\mathrm{floor}} + (1-g_{\mathrm{floor}})\,\operatorname{sigmoid}(\kappa z_i),
\end{aligned}
\end{equation}
Since $\mathbf{V}_{\mathrm{bias}}$ has orthonormal columns, $\|\mathbf{V}_{\mathrm{bias}}^\top\mathbf{t}_i\|_2=\|\mathbf{p}_i\|_2$. For the main experiments, we use $\kappa=5$ and $g_{\mathrm{floor}}=0.3$; other $g_{\mathrm{floor}}$ values are used only in the sensitivity analysis. Slerp then rotates toward the target:
\begin{equation}
\widehat{\mathbf{h}}_i^{\mathrm{steered}}
=
\operatorname{Slerp}\!\bigl(\widehat{\mathbf{h}}_i,\widehat{\mathbf{h}}_i^{\mathrm{target}};\,\alpha g_i\bigr),
\end{equation}
and norm restoration produces the final output:
\begin{equation}
\mathbf{h}_i^{\mathrm{steered}} = r_i \,\widehat{\mathbf{h}}_i^{\mathrm{steered}}.
\end{equation}
The steered tensor is reshaped to its original layout and cast to the input dtype before being returned by the hook.

\subsubsection{Saved artifacts}
\label{app:saved-artifacts}

The discovery stage saves a \texttt{.pt} checkpoint containing $(\mathbf{V}_{\mathrm{bias}},\boldsymbol{\mu},k,\tilde{b},\sigma_b)$ together with metadata: singular values $\sigma_j$, explained-variance ratios $\rho_j=\sigma_j^2/\sum_\ell\sigma_\ell^2$, the full protected-coordinate magnitude distribution $\{b(\mathbf{c})\}$ (for diagnostic histograms, see Appendix~\ref{app:hparams}), and discovery-set sizes. This checkpoint is sufficient for inference-time loading.

\subsection{Proof of Proposition~\ref{prop:ggss-projection-norm}}
\label{app:proof-ggss-projection-norm}

\begin{proof}
By Eq.~\eqref{eq:ggss-svd-basis}, the columns of $\mathbf{V}_{\mathrm{bias}}$ are
right singular vectors of $\mathbf{S}$. Hence
$\mathbf{V}_{\mathrm{bias}}^\top\mathbf{V}_{\mathrm{bias}}=I_k$. Therefore
$\mathbf{V}_{\mathrm{bias}}\mathbf{V}_{\mathrm{bias}}^\top$ is symmetric and idempotent, so it
is the orthogonal projector onto $\operatorname{span}(\mathbf{V}_{\mathrm{bias}})$.
Consequently,
\[
I-\mathbf{V}_{\mathrm{bias}}\mathbf{V}_{\mathrm{bias}}^\top
\]
is the orthogonal projector onto
\[
\operatorname{span}(\mathbf{V}_{\mathrm{bias}})^\perp
=
\{\mathbf{z}\in\mathbb R^D:\mathbf{V}_{\mathrm{bias}}^\top\mathbf{z}=\mathbf{0}\}.
\]
Using Eq.~\eqref{eq:ggss-projection}, we have
\[
\mathbf{t}_i^{\mathrm{clean}}
=
(I-\mathbf{V}_{\mathrm{bias}}\mathbf{V}_{\mathrm{bias}}^\top)\mathbf{t}_i,
\]
which proves the projection claim.

It remains to prove norm preservation. Let
$\mathbf{p}=\widehat{\mathbf{h}}_i$, $\mathbf{q}=\widehat{\mathbf{h}}_i^{\mathrm{target}}$, and
$\theta=\arccos\langle \mathbf{p},\mathbf{q}\rangle$. If $\mathbf{p}\neq\mathbf{q}$ and $\mathbf{q}\neq -\mathbf{p}$, define
\[
\mathbf{u}
=
\frac{\mathbf{q}-\cos(\theta)\mathbf{p}}{\sin(\theta)}.
\]
Then $\mathbf{u}$ has unit norm and is orthogonal to $\mathbf{p}$. The Slerp formula can be
written as
\[
\operatorname{Slerp}(\mathbf{p},\mathbf{q};\beta_i)
=
\cos(\beta_i\theta)\mathbf{p}
+
\sin(\beta_i\theta)\mathbf{u},
\]
which has unit norm. If $\mathbf{p}=\mathbf{q}$, the convention
$\operatorname{Slerp}(\mathbf{p},\mathbf{p};\beta_i)=\mathbf{p}$ also gives a unit vector. Therefore
$\widehat{\mathbf{h}}_i^{\mathrm{steered}}$ has unit norm under the stated cases. By
Eq.~\eqref{eq:ggss-norm-restore} and $r_i=\|\mathbf{h}_i\|_2$,
\[
\|\mathbf{h}_i^{\mathrm{steered}}\|_2
=
\|r_i\widehat{\mathbf{h}}_i^{\mathrm{steered}}\|_2
=
r_i
=
\|\mathbf{h}_i\|_2.
\]
This proves the result.
\end{proof}

\subsection{Ablation Variants}
\label{app:ablations_impl}

We evaluate four SVD ablations that isolate individual design choices of \modelname.

\begin{table*}[!ht]
\centering
\setlength{\tabcolsep}{2.5pt}
\renewcommand{\arraystretch}{1.05}
\scriptsize

\providecommand{\dual}[2]{\begin{tabular}{@{}r@{}}#1\\[-1.5pt]{\tiny\color{gray!55!black}#2}\end{tabular}}
\providecommand{\dualb}[2]{\begin{tabular}{@{}r@{}}{\bfseries\boldmath #1}\\[-1.5pt]{\tiny\color{gray!55!black}#2}\end{tabular}}
\providecommand{\dualo}[2]{\begin{tabular}{@{}r@{}}{\bfseries\boldmath #1}\\[-1.5pt]{\tiny\color{gray!55!black}{\bfseries\boldmath #2}}\end{tabular}}
\providecommand{\dualob}[2]{\begin{tabular}{@{}r@{}}{\bfseries\boldmath #1}\\[-1.5pt]{\tiny\color{gray!55!black}{\bfseries\boldmath #2}}\end{tabular}}
\providecommand{\avg}[1]{#1}
\providecommand{\avgb}[1]{{\bfseries\boldmath #1}}
\providecommand{\avgo}[1]{{\bfseries\boldmath #1}}
\providecommand{\avgob}[1]{{\bfseries\boldmath #1}}

\resizebox{\textwidth}{!}{%
\begin{tabular}{@{}l rrr r rrr r rrr r rrr r@{}}
\toprule[1.2pt]
& \multicolumn{4}{c}{\textbf{Pixtral-12B}} & \multicolumn{4}{c}{\textbf{LLaVA-1.6-Vicuna-7B}} & \multicolumn{4}{c}{\textbf{LLaVA-1.6-Mistral-7B}} & \multicolumn{4}{c}{\textbf{Qwen3-VL-4B}} \\
\cmidrule(lr){2-5}\cmidrule(lr){6-9}\cmidrule(lr){10-13}\cmidrule(lr){14-17}
\textbf{Method}
  & \makecell{MCQ\\$\times 10^3 \downarrow$}
  & \makecell{2AFC\\$\downarrow$}
  & \makecell{N/D\\$\downarrow$}
  & \makecell{Avg\\$\Delta\%\downarrow$}
  & \makecell{MCQ\\$\times 10^3 \downarrow$}
  & \makecell{2AFC\\$\downarrow$}
  & \makecell{N/D\\$\downarrow$}
  & \makecell{Avg\\$\Delta\%\downarrow$}
  & \makecell{MCQ\\$\times 10^3 \downarrow$}
  & \makecell{2AFC\\$\downarrow$}
  & \makecell{N/D\\$\downarrow$}
  & \makecell{Avg\\$\Delta\%\downarrow$}
  & \makecell{MCQ\\$\times 10^3 \downarrow$}
  & \makecell{2AFC\\$\downarrow$}
  & \makecell{N/D\\$\downarrow$}
  & \makecell{Avg\\$\Delta\%\downarrow$} \\
\midrule
Baseline  & 25.75 & 0.243 & 0.425 & $0\%$ & 8.70 & 0.000 & 0.600 & $0\%$ & 26.87 & 0.000 & 0.625 & $0\%$ & 14.65 & 0.368 & 0.400 & $0\%$ \\
\midrule
\multicolumn{17}{@{}l@{}}{\emph{Hard projection, no Slerp, no gate}} \\
Pooled SVD (Eucl.)
  & \dualb{10.30}{$-60\%$}
  & \dual{0.173}{$-29\%$}
  & \dual{0.425}{$+0\%$}
  & \avg{$-30\%$}
  & \dual{9.91}{$+14\%$}
  & \dual{0.000}{--}
  & \dualb{0.025}{$-96\%$}
  & \avg{$-41\%$}
  & \dual{5.81}{$-78\%$}
  & \dual{0.000}{--}
  & \dual{0.300}{$-52\%$}
  & \avg{$-65\%$}
  & \dual{6.49}{$-56\%$}
  & \dual{0.190}{$-48\%$}
  & \dualb{0.100}{$-75\%$}
  & \avgb{$-60\%$} \\
Pooled SVD (sph.)
  & \dual{15.54}{$-40\%$}
  & \dual{0.098}{$-60\%$}
  & \dualb{0.125}{$-71\%$}
  & \avgb{$-57\%$}
  & \dualb{1.36}{$-84\%$}
  & \dual{0.000}{--}
  & \dualb{0.025}{$-96\%$}
  & \avgb{$-90\%$}
  & \dual{11.20}{$-58\%$}
  & \dual{0.000}{--}
  & \dualb{0.050}{$-92\%$}
  & \avg{$-75\%$}
  & \dual{8.37}{$-43\%$}
  & \dual{0.173}{$-53\%$}
  & \dual{0.150}{$-62\%$}
  & \avg{$-53\%$} \\
\midrule
\multicolumn{17}{@{}l@{}}{\emph{Per-token hard projection}} \\
Per-token SVD (Eucl.)
  & \dual{24.59}{$-4\%$}
  & \dual{0.172}{$-29\%$}
  & \dual{0.522}{$+23\%$}
  & \avg{$-4\%$}
  & \dual{4.26}{$-51\%$}
  & \dual{0.000}{--}
  & \dual{0.600}{$+0\%$}
  & \avg{$-25\%$}
  & \dualb{2.08}{$-92\%$}
  & \dual{0.000}{--}
  & \dual{0.625}{$+0\%$}
  & \avg{$-46\%$}
  & \dual{4.35}{$-70\%$}
  & \dual{0.245}{$-33\%$}
  & \dual{0.475}{$+19\%$}
  & \avg{$-28\%$} \\
Per-token SVD (sph.)
  & \dual{32.84}{$+28\%$}
  & \dual{0.227}{$-7\%$}
  & \dual{0.400}{$-6\%$}
  & \avg{$+5\%$}
  & \dual{2.05}{$-76\%$}
  & \dual{0.000}{--}
  & \dual{0.625}{$+4\%$}
  & \avg{$-36\%$}
  & \dual{2.74}{$-90\%$}
  & \dual{0.000}{--}
  & \dual{0.625}{$+0\%$}
  & \avg{$-45\%$}
  & \dualb{4.19}{$-71\%$}
  & \dual{0.286}{$-22\%$}
  & \dual{0.450}{$+12\%$}
  & \avg{$-27\%$} \\
\midrule
\modelname\ \textbf{(Ours)}
  & \dualo{16.85}{$-35\%$}
  & \dualob{0.096}{$-61\%$}
  & \dualob{0.125}{$-71\%$}
  & \avgo{$-55\%$}
  & \dualob{1.36}{$-84\%$}
  & \dualo{0.000}{--}
  & \dualob{0.025}{$-96\%$}
  & \avgob{$-90\%$}
  & \dualo{8.48}{$-68\%$}
  & \dualo{0.000}{--}
  & \dualob{0.050}{$-92\%$}
  & \avgob{$-80\%$}
  & \dualo{5.10}{$-65\%$}
  & \dualob{0.154}{$-58\%$}
  & \dualo{0.175}{$-56\%$}
  & \avgob{$-60\%$} \\
\bottomrule[1.2pt]
\end{tabular}%
}
\caption{\textbf{SVD ablations of \modelname\ across all models and bias tasks.} Rows remove the gate and Slerp, and per-token rows additionally replace pooled SVD with per-token SVD. Results use the same best-avg-$\alpha$ protocol as Table~\ref{tab:main_results}; per-column best values within this ablation set are bolded.}
\label{tab:svd_ablation}
\end{table*}

\paragraph{Pooled SVD (Eucl.)}
This removes both spherical geometry and the Slerp/gate. Images are Euclidean-pooled, per-group Euclidean means are subtracted, and SVD on the resulting shift matrix yields a Euclidean bias basis. Inference subtracts $\alpha\mathbf{V}_{\mathrm{bias}}\mathbf{V}_{\mathrm{bias}}^{\top}(\mathbf{h}_i-\boldsymbol{\mu})$ from every token. Because it is Euclidean, this variant does not preserve token norms.

\paragraph{Pooled SVD (sph.)}
Retains spherical geometry (Fr\'echet means, tangent shifts, exp-map projection, norm restoration) but uses hard null-space projection in tangent space:
\begin{equation*}
\mathbf{t}_i^{\mathrm{steered}} = \mathbf{t}_i - \alpha\mathbf{V}_{\mathrm{bias}}\mathbf{V}_{\mathrm{bias}}^{\top}\mathbf{t}_i.
\end{equation*}
This is identical to \modelname\ with the Slerp step replaced by hard projection and the gate disabled ($g_i\!\equiv\!1$).

\paragraph{Per-token SVD (Eucl.) and Per-token SVD (sph.)}
These variants compute per-token centers $\boldsymbol{c}_{o,b,g,t}$ and per-token shifts before SVD, preserving token position information in discovery. The inference step then applies standard null-space projection. Per-token SVD is more expensive and, on most (model, task) cells, dominated by the pooled variant.

\subsection{Baseline Methods}
\label{app:baselines_impl}

\subsubsection{Classifier-Guided Mean-Shift Correction (MeanDiff)}

We adapt classifier-based debiasing to the generative VLM setting using a single image-level representation per image. For each discovery image, the activation matrix is mean-pooled to obtain $\bar{\mathbf{h}}\in\mathbb{R}^{D}$. A race classifier is then trained on these pooled activations. We consider two classifiers: an RBF-kernel SVM (\texttt{sklearn.svm.SVC(kernel="rbf")}, $C{=}1$, $\gamma{=}\texttt{scale}$) and a multinomial logistic regression (GPU-accelerated cuML where available, otherwise scikit-learn). For the Euclidean variant, the class mean $\boldsymbol{\mu}_r$ and global mean $\boldsymbol{\mu}_{\mathrm{global}}$ yield the class-specific shift $\boldsymbol{\delta}_r = \boldsymbol{\mu}_r-\boldsymbol{\mu}_{\mathrm{global}}$. At inference time the pooled activation is classified and $\alpha\boldsymbol{\delta}_{\hat r}$ is subtracted from every token. The spherical variant uses Fr\'echet means and tangent-space subtraction with norm restoration.

\subsubsection{INLP}

We implement Iterative Null-space Projection \cite{ravfogel2020null} in both Euclidean and spherical variants. Let $\mathbf{X}\in\mathbb{R}^{N\times D}$ and $y\in\{1,\dots,K\}^N$. The iteration is $\mathbf{X}^{(0)}=\mathbf{X}$. For $m=0,1,\dots$ we fit a multinomial logistic-regression classifier $f^{(m)}$ on $\mathbf{X}^{(m)}$. If accuracy falls to within $10^{-6}$ of the uniform-chance baseline $1/K$, the iteration halts. Otherwise, the orthonormalized rows of the weight matrix are appended to the bias basis and $\mathbf{X}^{(m+1)}$ is obtained by projecting $\mathbf{X}^{(m)}$ into their null space. We use at most $10$ iterations and $C{=}1.0$. After convergence, stacking and orthonormalizing the extracted directions yields $\mathbf{V}_{\mathrm{bias}}$. The Euclidean variant applies standard null-space steering. The spherical variant normalizes, computes a Fr\'echet global mean, and projects in the tangent space before exp-mapping and norm-restoring.

\subsubsection{LEACE}

We additionally evaluate LEACE \cite{belrose2023leace}, which provides a closed-form least-squares concept eraser. Let $\boldsymbol{\Sigma}_{\mathbf{x}}$ denote the covariance of pooled discovery activations $\bar{\mathbf{h}}$ and let $\boldsymbol{\Sigma}_{\mathbf{x}\mathbf{z}}$ denote the cross-covariance with the one-hot race labels. LEACE computes a whitening $\mathbf{W}=\boldsymbol{\Sigma}_{\mathbf{x}}^{-1/2}$, takes the top-$K\!-\!1$ left singular vectors of $\mathbf{W}\boldsymbol{\Sigma}_{\mathbf{x}\mathbf{z}}$, and applies the affine projection $P(\mathbf{x}) = \mathbf{x} - \mathbf{W}^{-1}\mathbf{U}\mathbf{U}^{\top}\mathbf{W}(\mathbf{x}-\boldsymbol{\mu}_{\mathrm{global}})$. We evaluate two variants: spherical LEACE, applied in tangent space, and gated geodesic LEACE, which uses LEACE's $\mathbf{U}$ as $\mathbf{V}_{\mathrm{bias}}$ inside the \modelname\ pipeline. The latter serves as a drop-in test of whether LEACE's closed-form subspace transfers under the Slerp + gate framework.

\subsubsection{BendVLM-style Retrieval Equalization}

Following \citet{gerych2024bendvlm}, we represent each discovery image by a single \emph{global} activation $\mathbf{e}=\operatorname{vec}(\mathbf{H})\in\mathbb{R}^{TD}$ and store L2-normalized reference vectors with their race labels. We replace BendVLM's text-side augmentation with an SVD-based demographic projector computed from race-center differences. At inference time we retrieve the $10$ nearest references per race via cosine similarity, average to per-race prototypes $\boldsymbol{\pi}_r$, and solve for the minimum-norm correction that equalizes cosine distance to all prototypes. The Euclidean variant uses a linear equidistance constraint. The spherical variant lifts everything to the tangent space at the global Fr\'echet mean, solves the equidistance system there, and maps back to the sphere with norm restoration. The correction tensor is cached after the first decoding step of each image because visual tokens remain unchanged across autoregressive steps.

\subsection{Model-Specific Hook Locations and Runtime Setup}
\label{app:runtime_impl}
\label{app:hook-sites}

All methods intervene at the layer denoted by \texttt{projection-mlp2}, which resolves to the final \texttt{nn.Linear} layer of the vision-to-language projection MLP. For Pixtral-12B, this is the second linear in the multi-modal projector. For LLaVA-1.6-Vicuna-7B and LLaVA-1.6-Mistral-7B, it is \codepath{model.model.mm\_projector[2]}. For Qwen3-VL-4B-Instruct, it is \codepath{model.model.visual.merger.linear\_fc2}. Models are loaded with \texttt{device\_map="auto"} and run in \texttt{bfloat16}. All steering computations use \texttt{float32} for numerical stability, and outputs are cast back to the input dtype before being returned by the hook. All stochastic components use seed $42$. Inner products passed to $\arccos$ are clamped to $[-1{+}10^{-7},1{-}10^{-7}]$, and geodesic distances below $10^{-8}$ are treated as zero.

\section{Additional Experimental Details}
\label{app:exp_details}

\subsection{Protocols}
\label{app:protocols}

\paragraph{Sweep}
All methods are evaluated at $\alpha\in\{0.25,0.5,0.75,1.0,1.5\}$ on three bias tasks. We report the best-$\alpha$ metric per (method, model, task). Discovery occupations are held out on a per-task basis as described in \S\ref{app:ggss_impl}: race-task discovery uses $\{\texttt{cook}, \texttt{doctor}, \texttt{lawyer}, \texttt{nurse}, \texttt{teacher}\}$, and gender-task discovery uses $\{\texttt{cook}, \texttt{lawyer}, \texttt{teacher}\}$ with \texttt{nurse} and \texttt{doctor} held out.

\paragraph{Decoding}
Greedy decoding (\texttt{do\_sample=False}). Default budget \texttt{max\_new\_tokens}$=2048$. For MCQ we use $8$, since the output is constrained to a single letter.

\paragraph{CEO Salary}
We sample $N_{\mathrm{bios}}=10$ biographies, replace name placeholders with ``the candidate,'' and ask the model to output only an integer salary. Parsing first searches for a 4--9 digit integer with commas/dollar signs stripped, then falls back to any integer~$\ge 1000$. The baseline maximum salary is used as an outlier cap for steered methods. Total queries per (method,$\alpha$): $B_{\mathrm{ceo}}\!\cdot\! N_{\mathrm{bios}}\!\cdot\! 5\!\cdot\! 2$, where $B_{\mathrm{ceo}}=8$. \emph{Evidential role.} With only $10$ biographies this probe is small, and we treat it as directional, corroborating evidence, not a stand-alone result; no headline claim rests on it. On both backbones where the probe elicits varying salaries, the point estimates move toward parity under \modelname\ (Qwen3-VL gender gap $-88$k $\rightarrow$ $-55$k; Pixtral $144$k $\rightarrow$ $33$k), the same direction as the statistically significant reductions on the larger probes, but the bio-cluster bootstrap CIs are wide and overlap (Qwen3-VL: $[-213\text{k}, 0]$ unsteered vs.\ $[-165\text{k}, 0]$ steered).

\paragraph{MMStar}
MMStar is evaluated with VLMEvalKit using exact matching rather than LLM-based judging to avoid judge-dependent confounds. We report $\Delta_{\mathrm{acc}} = \mathrm{Acc}^{\mathrm{method}}(\alpha) - \mathrm{Acc}^{\mathrm{baseline}}$.

\subsection{Hyperparameters}
\label{app:hparams}

Table~\ref{tab:hparams} summarizes the three \modelname-specific hyperparameters and the ranges we explored. Best-$\alpha$ is selected once per (method, model) pair under the best-avg-$\alpha$ protocol. $\kappa$ and $g_{\mathrm{floor}}$ are \emph{fixed} across all main-table experiments.

\begin{table}[ht]
\centering
\small
\begin{tabular}{@{}lll@{}}
\toprule
Hyperparameter & Default & Range explored \\
\midrule
$\alpha$ (strength) & best-$\alpha$ & $\{0.25,0.5,0.75,1.0,1.5\}$ \\
$\kappa$ (gate sharpness) & $5$ & $\{1,2,5,10\}$ \\
$g_{\mathrm{floor}}$ (gate floor) & $0.3$ & $\{0.0,0.3,0.5\}$ \\
$k$ (subspace dim) & $K{-}1$ & fixed \\
\bottomrule
\end{tabular}
\caption{\modelname\ hyperparameters and explored ranges.}
\label{tab:hparams}
\end{table}

\begin{table}[ht]
\centering
\scriptsize
\setlength{\tabcolsep}{3.2pt}
\renewcommand{\arraystretch}{0.95}
\begin{tabular}{@{}lcccc@{}}
\toprule
\textbf{Method} & \textbf{Pixtral} & \textbf{Vicuna} & \textbf{Mistral} & \textbf{Qwen3} \\
\midrule
\textbf{INLP (Eucl.)} & 1.0 & 0.25 & 1.0 & 1.5 \\
\textbf{INLP (sph.)} & 1.0 & 1.0 & 1.0 & 1.5 \\
\textbf{MeanDiff-SVM (Eucl.)} & 0.25 & 0.5 & 0.75 & 1.0 \\
\textbf{MeanDiff-SVM (sph.)} & 0.25 & 0.25 & 0.75 & 0.5 \\
\textbf{MeanDiff-LR (Eucl.)} & 0.75 & 0.5 & 1.0 & 1.0 \\
\textbf{MeanDiff-LR (sph.)} & 0.75 & 1.0 & 0.75 & 1.0 \\
\textbf{BendVLM (Eucl.)} & 1.0 & 0.75 & 0.75 & 1.5 \\
\textbf{BendVLM (sph.)} & 0.25 & 0.5 & 0.25 & 1.5 \\
\midrule
\textbf{Pooled SVD (Eucl.)} & 0.5 & 1.0 & 0.75 & 1.5 \\
\textbf{Pooled SVD (sph.)} & 0.75 & 1.0 & 1.0 & 1.5 \\
\textbf{Per-token SVD (Eucl.)} & 0.75 & 0.25 & 0.5 & 1.0 \\
\textbf{Per-token SVD (sph.)} & 1.0 & 1.0 & 0.25 & 1.5 \\
\midrule
\textbf{\modelname} & 0.75 & 1.0 & 1.0 & 1.5 \\
\bottomrule
\end{tabular}
\caption{\textbf{Best-avg-$\alpha$ choices used in Tables~\ref{tab:main_results} and~\ref{tab:mmstar}.} Each value is the single operating point selected for that (method, model) pair.}
\label{tab:alpha_choices}
\end{table}

\subsection{Per-Model Full Results and MMStar}
\label{app:per_model}
\label{app:mmstar}
\label{app:layers}

Full per-(method, $\alpha$) tables are reported in the project repository (\texttt{results\_report.md}). Table~\ref{tab:main_results} in the main paper reports best-$\alpha$ values. Layer comparisons (other than \texttt{projection-mlp2}) were preliminary and preserved unchanged from earlier work, and we do not recommend reading headline numbers from them.

\paragraph{MMStar preservation}
Table~\ref{tab:mmstar} reports the compact MMStar summary for all paper-relevant methods at each method's best-avg-$\alpha$. The project repository contains the corresponding evaluation artifacts and per-run outputs.

\subsection{Statistical Significance Protocol}
\label{app:significance}

Decoding is greedy and deterministic with a fixed seed, so repeated runs are bit-identical; the relevant uncertainty is sampling over evaluation items. For every cell of Table~\ref{tab:main_results} we computed: (i) 95\% bootstrap confidence intervals (10{,}000 resamples), over items and over base identities, for the metric and for the paired \%-change vs.\ the unsteered baseline on the same resampled items; and (ii) paired sign-flip permutation tests between methods, which are valid regardless of the plug-in metric's small-sample bias because both methods are evaluated on the same items. MMStar is compared per-question with exact McNemar tests. Table~\ref{tab:significance} in the main text summarizes the outcomes.

The largest individual claims hold on their own: the LLaVA-Vicuna Nurse/Doctor reduction ($-96\%$) has $p<10^{-4}$ with an item-level CI of $[-118\%,-57\%]$ of the baseline gap, and the MMStar ``within $\pm 0.6$\,p.p.''\ claim is backed by per-question paired CIs (all within $\pm 1.9$\,pp) and exact McNemar tests (all $p\geq 0.09$). By contrast, INLP (sph.), the strongest bias-side competitor, significantly damages Pixtral's MMStar ($-6.1$\,pp race / $-3.3$\,pp gender, McNemar $p<10^{-4}$).

\subsection{Held-Out $\alpha$ Selection (Cross-Fitted)}
\label{app:heldout_alpha}

Probes are split by base identity into two folds; $\alpha$ is selected on one fold with the paper's best-avg-$\alpha$ rule and metrics are reported on the disjoint fold, cross-fitted, applied symmetrically to all methods. Table~\ref{tab:heldout_alpha} reports \modelname: the selected $\alpha$ matches the paper's on six of eight folds, and the test-fold reduction is negative in every fold on every model.

\begin{table}[!ht]
\centering
\setlength{\tabcolsep}{4.0pt}
\renewcommand{\arraystretch}{1.05}
\scriptsize
\begin{tabular}{@{}l c c@{}}
\toprule[1.2pt]
\textbf{Model} & \makecell{selected $\alpha$,\\fold A / B (paper $\alpha$)} & \makecell{test-fold Avg\,$\Delta\%$,\\fold A / B} \\
\midrule
Pixtral-12B      & $0.75$ / $1.0$ ($0.75$) & $-27.4\%$ / $-33.8\%$ \\
LLaVA-Vicuna-7B  & $1.0$ / $0.25$ ($1.0$)  & $-50.5\%$ / $-50.0\%$ \\
LLaVA-Mistral-7B & $1.0$ / $1.0$ ($1.0$)   & $-77.8\%$ / $-93.5\%$ \\
Qwen3-VL-4B      & $1.5$ / $1.5$ ($1.5$)   & $-34.4\%$ / $-40.8\%$ \\
\bottomrule[1.2pt]
\end{tabular}
\caption{\textbf{Held-out $\alpha$ protocol for \modelname:} $\alpha$ selected on fold A (identities 1--4) is evaluated on fold B (identities 5--8) and vice versa.}
\label{tab:heldout_alpha}
\end{table}

\subsection{Per-Dimension MMStar Results}
\label{app:mmstar_dims}

Table~\ref{tab:mmstar_dims} breaks MMStar down by capability dimension at the deployed steering strength. All six dimensions, including fine-grained perception, stay within the $\pm 6$\,pp binomial noise band ($n=250$ per dimension) of the unsteered model on all four backbones.

\begin{table*}[!ht]
\centering
\setlength{\tabcolsep}{5.0pt}
\renewcommand{\arraystretch}{1.05}
\scriptsize
\begin{tabular}{@{}l l c c c c c c@{}}
\toprule[1.2pt]
\textbf{Model} & \textbf{Steering} & CP & FP & IR & LR & math & S\&T \\
\midrule
Pixtral-12B      & race   & $68.8\!\rightarrow\!69.2$ & $44.0\!\rightarrow\!41.6$ & $68.8\!\rightarrow\!71.2$ & $57.2\!\rightarrow\!55.6$ & $45.6\!\rightarrow\!44.4$ & $37.2\!\rightarrow\!37.2$ \\
Pixtral-12B      & gender & $68.8\!\rightarrow\!68.0$ & $44.0\!\rightarrow\!44.4$ & $68.8\!\rightarrow\!68.8$ & $57.2\!\rightarrow\!55.6$ & $45.6\!\rightarrow\!45.2$ & $37.2\!\rightarrow\!37.6$ \\
LLaVA-Vicuna-7B  & race   & $59.2\!\rightarrow\!58.0$ & $32.8\!\rightarrow\!38.0$ & $46.0\!\rightarrow\!45.6$ & $31.2\!\rightarrow\!32.8$ & $27.2\!\rightarrow\!25.6$ & $27.6\!\rightarrow\!26.8$ \\
LLaVA-Vicuna-7B  & gender & $59.2\!\rightarrow\!59.2$ & $32.8\!\rightarrow\!32.4$ & $46.0\!\rightarrow\!46.8$ & $31.2\!\rightarrow\!30.4$ & $27.2\!\rightarrow\!26.8$ & $27.6\!\rightarrow\!27.2$ \\
LLaVA-Mistral-7B & race   & $62.0\!\rightarrow\!62.0$ & $34.8\!\rightarrow\!36.4$ & $48.0\!\rightarrow\!46.4$ & $35.6\!\rightarrow\!35.2$ & $27.2\!\rightarrow\!27.6$ & $23.2\!\rightarrow\!25.6$ \\
LLaVA-Mistral-7B & gender & $62.0\!\rightarrow\!62.4$ & $34.8\!\rightarrow\!34.8$ & $48.0\!\rightarrow\!47.6$ & $35.6\!\rightarrow\!34.0$ & $27.2\!\rightarrow\!27.2$ & $23.2\!\rightarrow\!24.4$ \\
Qwen3-VL-4B      & race   & $72.4\!\rightarrow\!72.4$ & $56.8\!\rightarrow\!57.2$ & $71.6\!\rightarrow\!71.2$ & $60.4\!\rightarrow\!61.2$ & $59.6\!\rightarrow\!60.0$ & $48.4\!\rightarrow\!48.0$ \\
Qwen3-VL-4B      & gender & $72.4\!\rightarrow\!72.8$ & $56.8\!\rightarrow\!56.4$ & $71.6\!\rightarrow\!72.8$ & $60.4\!\rightarrow\!61.6$ & $59.6\!\rightarrow\!60.0$ & $48.4\!\rightarrow\!49.2$ \\
\bottomrule[1.2pt]
\end{tabular}
\caption{\textbf{MMStar accuracy per capability dimension}, unsteered $\rightarrow$ \modelname\ at the deployed $\alpha$ (CP coarse perception, FP fine-grained perception, IR instance reasoning, LR logical reasoning, S\&T science \& technology).}
\label{tab:mmstar_dims}
\end{table*}

\subsection{MME Results}
\label{app:mme}

As a second capability benchmark beyond MMStar, we evaluated MME on Qwen3-VL-4B under \modelname\ at the deployed strength: the perception total moves $1693.9\!\rightarrow\!1685.6$ ($-0.4\%$) and the reasoning total $631.1\!\rightarrow\!617.1$ ($-1.8\%$), with no subcategory collapsing.

\subsection{Demographic-Recognition Probes and the $\alpha$ Dial}
\label{app:recognition}

To test whether steering debiases demographic perception or deletes it, the steered model is explicitly asked the perceived gender and race of the held-out counterfactual images. Table~\ref{tab:recognition} shows the intervention is attribute-specific: steering one attribute leaves recognition of the \emph{other} attribute intact on all four backbones. Table~\ref{tab:alpha_dial} shows the removal-vs-retention trade-off is controllable through $\alpha$: at the bias-minimizing operating point the targeted attribute's reportability on face-only close-ups is strongly reduced, which is the expected behaviour of subspace removal, while at moderate strength the model retains the attribute with much of the debiasing already realized. The knee of this trade-off is attribute-dependent, so $\alpha$ can be chosen per attribute and task; tasks that legitimately require the attribute can operate at moderate $\alpha$ or gate steering off, without retraining.

\begin{table}[!ht]
\centering
\setlength{\tabcolsep}{4.0pt}
\renewcommand{\arraystretch}{1.05}
\scriptsize
\begin{tabular}{@{}l c c@{}}
\toprule[1.2pt]
\textbf{Model} & \makecell{gender recognition,\\under race steering} & \makecell{race recognition,\\under gender steering} \\
\midrule
Pixtral-12B      & $88.7 \rightarrow 88.7$  & $70.0 \rightarrow 75.0$ \\
Qwen3-VL-4B      & $97.5 \rightarrow 96.3$  & $80.0 \rightarrow 82.5$ \\
LLaVA-Vicuna-7B  & $93.8 \rightarrow 81.2$  & $66.2 \rightarrow 66.2$ \\
LLaVA-Mistral-7B & $98.8 \rightarrow 100.0$ & $72.5 \rightarrow 72.5$ \\
\bottomrule[1.2pt]
\end{tabular}
\caption{\textbf{Recognition accuracy (\%) of the non-targeted attribute}, unsteered $\rightarrow$ steered at the paper's $\alpha$: the attribute that steering is not aimed at remains recognizable.}
\label{tab:recognition}
\end{table}

\begin{table}[!ht]
\centering
\setlength{\tabcolsep}{3.0pt}
\renewcommand{\arraystretch}{1.05}
\scriptsize
\begin{tabular}{@{}l c c c c@{}}
\toprule[1.2pt]
$\alpha$ & \makecell{race\\recog.\ (\%)} & \makecell{MCQ race bias\\(JSD$\times 10^3$)} & \makecell{gender\\recog.\ (\%)} & \makecell{Nurse/Doctor\\gap} \\
\midrule
$0$ (unsteered) & $80.0$ & $14.65$          & $97.5$ & $0.400$ \\
$0.5$           & $76.2$ & $6.54$ ($-55\%$) & $96.3$ & $0.375$ ($-6\%$) \\
$1.5$ (paper)   & $20.0$ & $5.10$ ($-65\%$) & $42.5$ & $0.175$ ($-56\%$) \\
\bottomrule[1.2pt]
\end{tabular}
\caption{\textbf{The $\alpha$ dial on Qwen3-VL-4B:} recognition of the targeted attribute vs.\ bias reduction under the corresponding steering mode. Most of the MCQ (race) reduction is realized by $\alpha=0.5$ with recognition nearly intact; the Nurse/Doctor (gender) gap responds mainly between $\alpha=0.5$ and the paper's operating point.}
\label{tab:alpha_dial}
\end{table}

\subsection{Fairness-Instruction Prompting Baseline}
\label{app:prompt_baseline}

The prompt-space baseline runs the identical probes, with no steering, after appending a fairness instruction to every prompt: ``\emph{do NOT let the person's perceived race, ethnicity, or gender influence your answer; base your answer only on task-relevant visual evidence}''. Table~\ref{tab:prompt_baseline} shows instruction prompting is inconsistent across models and tasks, and can even amplify bias, whereas \modelname\ reduces bias consistently on every model.

\begin{table*}[!ht]
\centering
\setlength{\tabcolsep}{6.0pt}
\renewcommand{\arraystretch}{1.05}
\scriptsize
\begin{tabular}{@{}l c c c c@{}}
\toprule[1.2pt]
\textbf{Model} & \makecell{MCQ salary\\(JSD$\times 10^3$)} & \makecell{2AFC income\\(std)} & \makecell{Nurse/Doctor\\gap} & \makecell{\modelname\\avg} \\
\midrule
Pixtral-12B      & $25.75\!\rightarrow\!18.88$ ($-27\%$) & $0.243\!\rightarrow\!0.281$ ($+16\%$) & $0.425\!\rightarrow\!0.425$ ($0\%$)   & $-55\%$ \\
Qwen3-VL-4B      & $14.65\!\rightarrow\!14.29$ ($-2\%$)  & $0.368\!\rightarrow\!0.299$ ($-19\%$) & $0.400\!\rightarrow\!0.450$ ($+13\%$) & $-60\%$ \\
LLaVA-Vicuna-7B  & $8.70\!\rightarrow\!16.06$ ($+85\%$)  & $0 \rightarrow 0$ (---)               & $0.600\!\rightarrow\!0.350$ ($-42\%$) & $-90\%$ \\
LLaVA-Mistral-7B & $26.87\!\rightarrow\!9.04$ ($-66\%$)  & $0 \rightarrow 0$ (---)               & $0.625\!\rightarrow\!0.575$ ($-8\%$)  & $-80\%$ \\
\bottomrule[1.2pt]
\end{tabular}
\caption{\textbf{Fairness-instruction prompting (no steering):} bias metric before $\rightarrow$ after adding the instruction (\% change). \modelname's average reduction shown for reference.}
\label{tab:prompt_baseline}
\end{table*}

\subsection{LEACE Baseline Results}
\label{app:leace_results}

LEACE \cite{belrose2023leace} is fit with the authors' official concept-erasure implementation and applied at the same projection layer as every other method (implementation details in Appendix~\ref{app:baselines_impl}), in two variants: a direct spherical port, and LEACE's closed-form subspace combined with our calibrated gate. Both were evaluated on all four backbones under the paper's best-avg-$\alpha$ protocol; Table~\ref{tab:added_baselines} in the main text reports the results. Ported this way, LEACE is a serious baseline: it wins the Pixtral column at its operating point and preserves Pixtral MMStar there ($-0.8$/$-0.9$\,pp for the two variants, McNemar $p\geq 0.15$), so its results do not stem from a broken port. It is, however, less consistent across backbones ($-26\%$ to $-31\%$ on Qwen3-VL), while \modelname\ keeps the best four-backbone mean and worst case and preserves MMStar throughout.

\subsection{Matched-Gate Geometry Ablation}
\label{app:geometry_ablation}

Table~\ref{tab:geometry_ablation} toggles Euclidean versus spherical steering at matched gate condition on all four backbones under the paper's full protocol (all bias tasks, best-avg-$\alpha$ applied identically to every variant), with paired sign-flip tests on every geometry contrast. At 4B and 7B the geometry toggle is within sampling noise in both gate conditions, and on Qwen3-VL the point estimates favor the Euclidean form; at the single Table~\ref{tab:component_ablation} cell, a Euclidean update with the same gate plus renormalization posts the lowest value we measured ($1.51$ vs.\ \modelname's $2.41$, paired $p\approx 0.75$; renormalization returns the update to the token's norm sphere, so this variant is a chordal discretization of the same on-sphere move). Geometry becomes decisive with scale: on Pixtral-12B the spherical form is significantly better under both toggles (no gate $p=0.031$; gate fixed $p=0.015$), both Euclidean forms significantly damage MMStar there ($-6.3$\,pp, McNemar $p<10^{-8}$) while both spherical forms preserve it on every backbone, and over-steering exposes failure modes the geodesic form did not show in any of our runs: the off-sphere update overshoots to roughly $9\times$ the unsteered bias at $\alpha=1.0$, and the renormalized variant degenerates outright there ($72$ of $80$ responses unparseable).

\begin{table*}[!ht]
\centering
\setlength{\tabcolsep}{6.0pt}
\renewcommand{\arraystretch}{1.05}
\scriptsize
\begin{tabular}{@{}l c c c c@{}}
\toprule[1.2pt]
\textbf{Variant} & Pixtral-12B & Qwen3-VL-4B & LLaVA-Mistral-7B & LLaVA-Vicuna-7B \\
\midrule
Euclidean hard projection (no gate) & $-29.6\%$ & $-59.7\%$ & $-65.2\%$ & $-41.0\%$ \\
Spherical hard projection (no gate) & $-56.6\%$ & $-52.8\%$ & $-75.2\%$ & $-90.1\%$ \\
\quad $p$ (geometry, no gate)       & $0.031$   & $0.30$    & $0.60$    & $0.50$    \\
\midrule
Euclidean + calibrated gate                       & $-18.3\%$ & $-62.4\%$ & $-67.6\%$ & $-86.1\%$ \\
\modelname\ (spherical + calibrated gate)         & $-55.3\%$ & $-59.9\%$ & $-80.2\%$ & $-90.1\%$ \\
\quad $p$ (geometry, gate fixed)                  & $0.015$   & $0.81$    & $0.41$    & $0.66$    \\
\bottomrule[1.2pt]
\end{tabular}
\caption{\textbf{Geometry toggled at matched gate condition}, Avg\,$\Delta\%$ under the paper's protocol (each variant at its own best-avg-$\alpha$; lower is better). ``$p$'' rows: paired sign-flip test of the geometry contrast directly above, combined across tasks. For the two LLaVA models the average covers MCQ and Nurse/Doctor only (their 2AFC baseline is zero, so $\Delta\%$ is undefined; the convention is identical for all methods, as in Table~\ref{tab:main_results}).}
\label{tab:geometry_ablation}
\end{table*}

\subsection{Gate Calibration: Pooled vs.\ Per-Token Statistics}
\label{app:gate_calibration_analysis}

The gate statistics $\tilde{b},\sigma_b$ are calibrated on pooled discovery representatives, while steering acts on individual tokens. We compared the two distributions directly under the paper's pooled subspace: the per-token bias-norm distribution is right-shifted relative to the pooled representatives (Qwen3-VL: median $0.196$ vs.\ $0.113$), so with pooled calibration the gate operates in its upper range on real tokens: mean gate $0.85$, with $72\%$ of tokens gated above $0.9$ and the least-biased $14\%$ suppressed below $0.4$. Pooled calibration therefore gives an operating point that steers most tokens and spares the low-bias tail. Re-running \modelname\ with the gate statistics recalibrated on the per-token distribution (same subspace, same $\alpha$) reaches the same bias reduction at the paper's operating point ($-65\%$ MCQ on Qwen3-VL, identical to pooled), the same or better reductions at lower $\alpha$ ($-91\%$ at $\alpha=0.5$), equally preserves MMStar ($61.8$ vs.\ baseline $61.5$, McNemar $p=0.56$), and under the full protocol ends within a point of pooled \modelname\ on Qwen3-VL ($-60.4\%$ vs.\ $-59.9\%$) and at $-84.6\%$ on LLaVA-Mistral. The choice of calibration set does not change the paper's conclusions.

\subsection{Discovery-Prompt Invariance}
\label{app:prompt_invariance}

The steered layer is the vision-to-language projection: its visual-token activations are computed before any interaction with the text prompt, so the discovery prompt architecturally cannot influence them. We verified this empirically: re-running discovery with four alternative prompts yields subspaces identical to numerical precision (principal-angle cosines $\geq 0.9997$) and identical end-to-end steering results (Table~\ref{tab:prompt_invariance}).

\begin{table}[!ht]
\centering
\setlength{\tabcolsep}{6.0pt}
\renewcommand{\arraystretch}{1.05}
\scriptsize
\begin{tabular}{@{}l c@{}}
\toprule[1.2pt]
\textbf{Discovery prompt} & steered JSD$\times 10^3$ \\
\midrule
``Describe this image in detail.'' (paper)        & $5.10$ \\
``What do you see in this image?''                & $5.10$ \\
``Caption this image.''                           & $5.10$ \\
``Briefly describe the person in the photo.''     & $5.10$ \\
``Write a short story inspired by this image.''   & $5.10$ \\
\bottomrule[1.2pt]
\end{tabular}
\caption{\textbf{Discovery-prompt invariance:} Qwen3-VL-4B MCQ race bias (JSD$\times 10^3$, unsteered $14.65$) when $\mathbf{V}_{\mathrm{bias}}$ is discovered with different prompts; steering applied at the paper's operating point.}
\label{tab:prompt_invariance}
\end{table}

\subsection{Artifact Documentation, Licensing, and Intended Use}
\label{sec:artifact_statement}

We use existing artifacts only for research evaluation and do not redistribute third-party datasets or model weights. REFLECT/FOCUS is used for controlled race- and gender-counterfactual bias evaluation; its repository is released under GPL-3.0 and documents 480 face-only counterfactual images across six occupations, eight source identities per occupation, and five race by two gender variants. SocialCounterfactuals is used consistently with its intended purpose of probing social bias in VLMs and is released under the MIT license. MMStar is used only as a general VLM capability benchmark through VLMEvalKit; VLMEvalKit is Apache-2.0 licensed. For MMStar, we rely on the original benchmark release and do not redistribute its data, since we could not verify an explicit dataset license from the public repository or dataset card.

Our released artifacts consist of GGSS code, configuration files, steering checkpoints, and aggregate evaluation outputs. These artifacts are intended for research on inference-time debiasing and auditing of frozen generative VLMs, not for making deployment claims without additional human review and domain-specific safety testing. The artifact documentation specifies the evaluated models, hook locations, bias tasks, capability benchmark, demographic attributes, occupations, held-out discovery/evaluation splits, and hyperparameter settings.

\end{document}